\newif\ifshortver
\newif\iflongver
\newif\iftodiscard
\providecommand{\algorithmname}{Algorithm}
\theoremstyle{plain}
\newtheorem{thm}{\protect\theoremname}
\theoremstyle{definition}
\newtheorem{defn}[thm]{\protect\definitionname}
\theoremstyle{plain}
\newtheorem{prop}[thm]{\protect\propositionname}
\theoremstyle{plain}
\newtheorem{lem}[thm]{\protect\lemmaname}
\theoremstyle{definition}
\newtheorem{rem}[thm]{\protect\remarkname}
\newcommand{\qs}{\delta}
\definecolor{lightgray}{rgb}{0.9,0.9,0.9}
\definecolor{lightred}{rgb}{1,0.8,0.8}
\definecolor{lightgreen}{rgb}{0.6,1,0.6}
\definecolor{lightyellow}{rgb}{1,1,0.5}
\definecolor{lightgrey}{rgb}{0.8,0.8,0.8}
\providecommand{\definitionname}{Definition}
\providecommand{\propositionname}{Proposition}
\providecommand{\lemmaname}{Lemma}
\providecommand{\theoremname}{Theorem}
\providecommand{\remarkname}{Remark}
\title{Communication-Efficient Laplace Mechanism for Differential Privacy via Random Quantization}
\begin{document}
\maketitle

\begin{abstract}
We propose the first method that realizes the Laplace mechanism exactly (i.e., a Laplace noise is added to the data) that requires only a finite amount of communication (whereas the original Laplace mechanism requires the transmission of a real number) while guaranteeing privacy against the server and database. Our mechanism can serve as a drop-in replacement for local or centralized differential privacy applications where the Laplace mechanism is used. Our mechanism is constructed using a random quantization technique. Unlike the simple and prevalent Laplace-mechanism-then-quantize approach, the quantization in our mechanism does not result in any distortion or degradation of utility. Unlike existing dithered quantization and channel simulation schemes for simulating additive Laplacian noise, our mechanism guarantees privacy not only against the database and downstream, but also against the honest but curious server which attempts to decode the data using the dither signals.
\end{abstract}
\begin{keywords}
Differential privacy, Laplace mechanism, dithered quantization, metric privacy, geo-indistinguishability
\end{keywords}
\section{Introduction}

In differential privacy, the Laplace mechanism~\cite{Dwork06DP,Dwork14Book} ensures the privacy of the data by adding a Laplace-distributed noise.
It is widely used in federated learning
\ifshortver
\cite{el2022differential},
\else
\cite{el2022differential,li2020federated},
\fi
and has been deployed to provide local differential privacy for commercial database management systems (e.g. \cite{kessler2019sap}).

The Laplace mechanism satisfies $d$-privacy or metric privacy \cite{Chatzikokolakis13Dprivacy}, an extension of $\epsilon$-differential privacy~\cite{Dwork06DP,Dwork14Book} to metric spaces. While $d$-privacy was originally studied in a centralized differential privacy setting, it can also be applied to local differential privacy~\cite{andres2013geo,alvim2018local}. 
For instance, in geo-indistinguishability \cite{andres2013geo}, the Laplace mechanism is used to allow users to send their approximate locations to the server, while masking their exact locations.
Also see~\cite{dewri2012local} where the Laplace mechanism is used for location privacy.
It was shown in~\cite{koufogiannis2015optimality} that among all additive noise mechanisms under a given $d$-privacy (or Lipschitz privacy) constraint, the Laplace mechanism attains the optimal utility measured by mean-squared error. 
\ifshortver
Also see~\cite{wang2014entropy,fernandes2021laplace} for other optimality results.
\else
Also see~\cite{wang2014entropy,fernandes2021laplace} for other results on the optimality of the Laplace mechanism. This motivates us to focus on schemes that realize the Laplace mechanism.\footnote{It was pointed out in \cite{geng2014optimal} that the Laplace mechanism is not optimal for $\epsilon$-differential privacy for a fixed sensitivity, though the $d$-privacy / Lipschitz privacy constraint~\cite{koufogiannis2015optimality} has an advantage that it can give a differential privacy guarantee that adapts to the sensitivity.}
\fi

The output of the Laplace mechanism is a real number. To transmit this number, a quantization is usually performed, which introduces a distortion and degrades the utility. The overall (Laplace+quantization) noise will not be Laplace.


This paper proposes a quantization method 
that realizes the Laplace mechanism
without any distortion.
To motivate our result, we discuss a simplified local differential privacy scenario (though our method applies to any local/centralized differential privacy scenario where the Laplace mechanism is used).
The user has the data $X$ (e.g. the user's location as in geo-indistinguishability \cite{andres2013geo}),
and wants to convey a noisy version of $X$ to the decoder (or server). The user
and the decoder are allowed to have shared randomness $U$ (e.g. by
sharing a random seed ahead of time). The user sends a compressed description
$M$ with a finite number of bits (using $X$, $U$ and the user's local randomness) to the decoder.
The decoder then recovers $\hat{X}$ (using $U,M$). The goal is to
attain:
\vspace{-5pt}
\begin{itemize}
\item \textbf{Communication cost:} The size of $M$ should be small.
\item \textbf{Decoder privacy:} The data $X$ is kept private from the decoder/server, i.e.,
the leakage of the information in $X$ by $(U,M)$ is small.
\item \textbf{Database privacy:} The data $X$ is kept private from the database (and downstream parties) which only stores $\hat{X}$. The leakage of the
information in $X$ by $\hat{X}$ is small.
\item \textbf{Database utility:} $\hat{X}$ should be close to $X$,
measured by the mean-squared error.
\end{itemize}

\iflongver
\medskip{}
\fi

\ifshortver
\begin{figure}
\else
\begin{figure*}
\fi
\begin{centering}
\ifshortver
\includegraphics[scale=0.638]{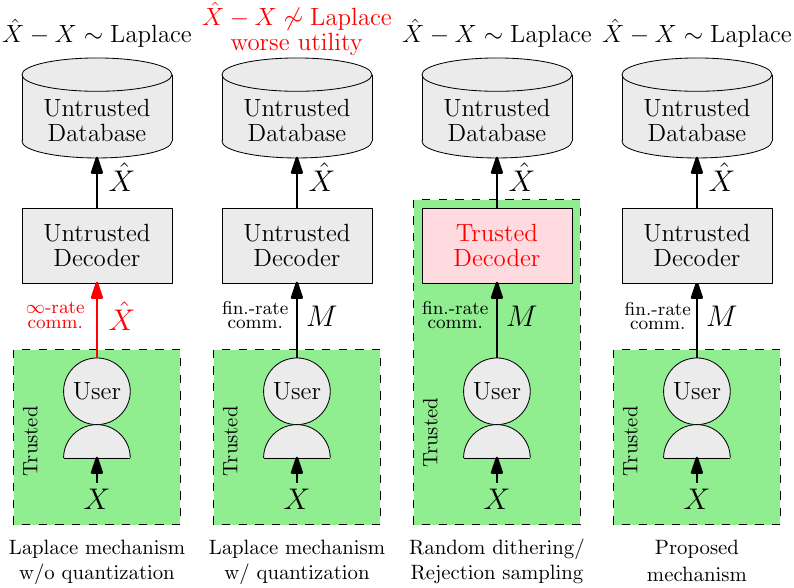}
\else
\includegraphics[scale=0.8]{diffquant_compare}
\fi
\par\end{centering}
\ifshortver
\vspace{-10pt}
\fi
\caption{\label{fig:compare}Comparison between privacy mechanisms. There are 3 desirable properties: A: finite communication rate (description $M$ has a finite number of bits); B: Laplace-distributed noise $\hat{X}-X$ (the optimal utility among additive noise mechanisms for $d$-privacy~\cite{koufogiannis2015optimality}); and C: privacy guarantee against the decoder. Each existing scheme violates one of them. Laplace mechanism w/o quantization violates A, Laplace mechanism w/ quantization violates B, and existing random dithering and rejection sampling methods (e.g. \cite{harsha2010communication,sfrl_trans,hegazy2022randomized,flamich2023greedy,lang2023joint}) do not guarantee C. The proposed mechanism in this paper satisfies all A, B and C.
}
\ifshortver
\vspace{-10pt}
\else
\end{figure*}
\fi
\ifshortver
\end{figure}
\fi

This is a local differential privacy setting since neither the decoder nor the database is trusted.
We separate decoder and database privacy to allow a finer control granularity.
If we have a higher level of trust on the piece of software
at the server handling the decoding of packets (which only the administrator can access), compared to any party (e.g. third-party data analyst) that may access the
information in the database later, then we can impose a more stringent database privacy constraint than the decoder privacy constraint. 
\iflongver
Also, we can exploit quantization
noise to provide better database privacy than the decoder
privacy. 
Quantization noise is not only a necessary downside of quantization,
but can act as a privacy-preserving noise to improve database privacy.
\fi

\iflongver
A simple approach is to first apply a privacy mechanism (e.g. Laplace), and then
quantize. 
This is technically the approach adopted by many existing
systems, where a privacy mechanism (e.g. Laplace) is applied on
the data stored in machine precision (e.g. $32$-bit floating point),
and then quantized to a lower precision (e.g. $16$-bit floating point or $8$-bit integer)
before transmission. 
This can ensure both decoder and database privacy.
However, there are two downsides. First, the overall (Laplace plus
quantization) error will no longer be Laplace-distributed, and the utility will be degraded.
Second, the quantization noise here is unhelpful in improving database
privacy, and the privacy parameter for database is only the same
as that for the decoder.
\fi

Locally differentially private requantization~\cite{duchi2013local,xiong2016randomized} is a technique for finding the randomized quantization function mapping $X$ to a finite set of values of $\hat{X}$ with the best privacy guarantee at the decoder. The size of the support of $\hat{X}$ is bounded by communication constraint. This 
\ifshortver
constraint
\else
additional constraint on $\hat{X}$ 
\fi
may result in a degradation in the database privacy-utility tradeoff.
\ifshortver
See \cite{chaudhuri2022privacy} for a related technique.
\else
In \cite{chaudhuri2022privacy}, the random quantization function is designed according to the minimum variance unbiased criteria.
Also refer to \cite{youn2023randomized} for another related random quantization mechanism. In all the aforementioned mechanisms, the set of values of $\hat{X}$ is discrete and fixed.
\fi

If it is desirable to have $\hat{X}$ following a continuous distribution (e.g. the Laplace mechanism), we cannot fix the support of $\hat{X}$ to a finite set, but rather we should utilize the shared randomness $U$ to randomize the decoder as well.
This can be achieved by subtractive dithering 
\ifshortver
\cite{gray1998quantization}. 
\else
\cite{gray1998quantization,gray1993dithered}. 
\fi
It was noted
in \cite{hegazy2022randomized} and \cite{lang2023joint} that subtractive
dithering, applied in a randomized manner, can produce a continuous nonuniform
error distribution. In particular, \cite{lang2023joint} studied the
use of randomized subtractive dithering to create privacy-preserving
noises such as multivariate $t$-distributed noises and Laplacian
noises. Channel simulation 
\ifshortver
\cite{bennett2002entanglement,li2018universal,sfrl_trans}
\else
\cite{bennett2002entanglement,li2018universal,sfrl_trans}
\fi
and rejection sampling
\ifshortver
\cite{harsha2010communication,flamich2023greedy}
\else
\cite{harsha2010communication,havasi2019minimal,shah2022optimal,flamich2023greedy}
\fi
are techniques that can achieve the same goal. 
However, these techniques themselves can only
guarantee that $\hat{X}$ is distributed as if it is $X$ corrupted
by additive noises (guaranteeing database privacy), but generally cannot guarantee
\ifshortver
decoder privacy.
\else
that the quantized signal together with the dither signal (i.e., $(U,M)$)
would not leak information about $X$ (needed for decoder privacy), since one can deduce which quantization cell contains $X$ using the dither signal and the quantized signal.
This might not be desirable if the decoding component of the server is
not completely trusted. 
\fi
The privacy analyses in \cite{lang2023joint}
focuses on $\hat{X}$ instead of $(U,M)$, which is for database privacy, not decoder  
\ifshortver
privacy. Refer to the preprint~\cite{shahmiri2023communication} for a discussion.
\else
privacy (see Remark~\ref{rem:jopeq}).
\fi

We have seen that none of the existing mechanisms can realize the Laplace mechanism exactly with decoder privacy and a finite communication rate.
In this paper, we propose 
the first differential privacy mechanism realizing the Laplace mechanism that has a provable guarantee on decoder privacy and bound on communication cost.
In our mechanism, the recovered
data $\hat{X}$ is distributed exactly as if it is the input data
$X$ plus a Laplace noise. Therefore, it can be applied to any application
where the Laplace mechanism is 
\ifshortver
used.
Decoder privacy
is achieved by designing a mixture of dithered quanitizers with the correct shapes.
Our mechanism is also applicable to centralized/global differential privacy \cite{Dwork06DP,Dwork14Book}, where the ``user'' becomes the data curator, ``decoder'' becomes the data analyst, and ``database'' becomes parties in the downstream.
\else
used, without introducing any additional distortion to the Laplace-noise-perturbed data, while reducing the communication cost. 
While database privacy and utility are guaranteed by the Laplace
mechanism, we also prove that our mechanism attains decoder privacy. This 
is achieved by designing a mixture of dithered quanitizers with the correct shapes and weights that results in a desirable tradeoff between communication cost and decoder privacy.

Our mechanism is also applicable to centralized/global differential privacy \cite{Dwork06DP,Dwork14Book}, where a trusted curator holds the data $D$ of users, and answers queries from third party data analysts about the statistics of the data in the form of some query function $f(D)$. The curator may require all queries to be made via an API library, which performs the decoding of the packets sent from the curator to the analysts. The previously discussed ``database privacy'' now corresponds to the privacy of the data $D$ with respect to the sanitized statistics $\hat{X}$ output by the API (which is a noisy version of $f(D)$), which is the privacy guarantee against ``lawful'' analysts that only uses the API as intended. The ``decoder privacy'' now corresponds to the privacy of the data $D$ with respect to all information available to the API, which is the privacy guarantee against analysts who reverse-engineer the API to gain more information. While both privacy guarantees are important, it is natural to expect that most analysts are ``lawful'', and impose a more stringent privacy guarantee against them.
\fi

\ifshortver
Proofs are moved to the preprint~\cite{shahmiri2023communication} to save space.
\fi

\iflongver
\begin{rem}
\label{rem:jopeq}
The privacy analysis in \cite{lang2023joint} focuses on $\hat{X}$ (i.e., database privacy) instead of $(U,M)$ (i.e., decoder privacy) because $\hat{X}$ is ``less distorted, and thus potentially more leaky'' compared to $M$. While this may be intuitively true, there are situations where $(U,M)$ leaks more information than $\hat{X}$ (e.g. in \cite{hegazy2022randomized} and the mechanism proposed in this paper), so focusing on $\hat{X}$ is insufficient to formally prove decoder privacy.

Also, it was noted in \cite{lang2023joint} that the JoPEQ mechanism can work with an infinitesimal privacy-preserving noise (the local randomness at the user), relying almost
solely on subtractive dithering (the shared randomness $U$). 
Some channel simulation and rejection sampling schemes \cite{harsha2010communication,sfrl_trans,hegazy2022randomized,flamich2023greedy} also do not utilize local randomness at the user, and rely solely on the shared randomness to create the desired noise.
We can show that schemes without local randomness cannot achieve decoder privacy (though they may achieve database privacy).
If there is no local randomness at the user, the description $M=g(X,U)$ is a function of the data $X$ and the shared randomness $U$. The conditional density can be computed as $f_{U,M|X}(u,m|x) = f_U(u) \mathbf{1}\{m=g(x,u)\}$, which changes discontinuously when $x$ changes, and hence decoder $d$-privacy cannot be achieved.
\end{rem}
\fi

\ifshortver
\else
\subsection*{Notations}
Scalar constants and random variables are denoted in lowercase (e.g. $x \in \mathbb{R}$) and uppercase (e.g, $X$), respectively. Vector constants and random variables are denoted in lowercase boldface (e.g. $\mathbf{x} \in \mathbb{R}^n$) and uppercase boldface (e.g, $\mathbf{X}$), respectively. 
\fi

\section{Preliminaries}

\subsection{Differential Privacy}
\iflongver
The notion of \emph{differential privacy} was  formalized in \cite{Dwork06DP}:

\medskip
\begin{defn}[$\epsilon$-differential privacy \cite{Dwork06DP,Dwork14Book,koufogiannis2015optimality}] \label{def:eDP}
    Let $\epsilon \geq 0$ be a given privacy budget, $\mathcal{X}$ be the set of possible private inputs (called \emph{databases}), $\mathcal{Y}$ be the set of possible outputs (called \emph{responses}), and $\Delta(\mathcal{Y})$ be the space of probability distributions over $\mathcal{Y}$. 
    Let $\mathcal{R}\subseteq \mathcal{X}^2$ be the set of pairs of adjacent databases.
    A randomized mechanism $\mathcal{A}:\mathcal{X} \rightarrow \Delta(\mathcal{Y})$ is $\epsilon$-\emph{differentially private} if for all $\mathcal{S} \subseteq \mathcal{Y}$ and for all $(x, x') \in \mathcal{R}$,
    \[
    \mathbb{P}(\mathcal{A}(x) \in \mathcal{S}) \leq e^{\epsilon}\mathbb{P}(\mathcal{A}(x') \in \mathcal{S}).
    \]
\end{defn}
\medskip

For centralized differential privacy \cite{Dwork06DP,Dwork14Book}, $\mathcal{R}$ is often taken to be the set of pairs of databases that differ in at most one record. For local differential privacy~\cite{Kasiviswanathan08LDP,duchi2013local}, we can take $\mathcal{R}=\mathcal{X}^2$. 

The adjacency condition in the original formulation \cite{Dwork06DP,Dwork14Book} depends how many records differ in two databases, i.e., the Hamming distance. Depending on the type of the element in $\mathcal{X}$, a suitable metric $d$ can be used to quantify the distance between $x$ and $x'$.
To broaden the scope of the differential privacy, Chatzikokolakis et al. \cite{Chatzikokolakis13Dprivacy} have introduced $d$-\emph{privacy} to relate the required privacy level to the distance between $x$ and $x'$ when $\mathcal{X}$ is a metric space equipped with a metric $d$. This is also known as geo-indistinguishability \cite{andres2013geo}.
\else
We review the notion of $d$-privacy or metric privacy \cite{Chatzikokolakis13Dprivacy,alvim2018local}, also known as geo-indistinguishability \cite{andres2013geo}.
\fi

\iflongver
\medskip
\else
\vspace{-6pt}
\fi
\begin{defn}[$\epsilon\cdot d$-privacy \cite{Chatzikokolakis13Dprivacy,andres2013geo,alvim2018local}] \label{def:dprivacy}
    Suppose $\mathcal{X}$ is a metric space equipped with a metric $d$, and $\mathcal{Y}$ is the set of possible outputs. Let $\epsilon>0$.
    A randomized mechanism $\mathcal{A}:\mathcal{X} \rightarrow \Delta(\mathcal{Y})$ 
    \ifshortver
    (where $\Delta(\mathcal{Y})$ is the space of distributions over $\mathcal{Y}$)
    \fi
    satisfies $\epsilon\cdot d$-\emph{privacy} if for every $\mathcal{S} \subseteq \mathcal{Y}$ and $x, x' \in \mathcal{X}$,
    \[
    \mathbb{P}(\mathcal{A}(x) \in \mathcal{S}) \leq e^{\epsilon \cdot d(x,x')}\mathbb{P}(\mathcal{A}(x') \in \mathcal{S}).
    \]
\end{defn}
\iflongver
\medskip
\else
\vspace{-6pt}
\fi

\iflongver
Under this formulation, the differential privacy constraint is viewed as a sensitivity constraint. 
Note that the standard $\epsilon$-differential privacy (Definition~\ref{def:eDP}) is a special case of $d$-privacy, by setting $d(x,x') = d_{H}(x,x')$ where $d_{H}(\cdot,\cdot)$ is the Hamming distance between $x$ and $x'$.
In \cite{koufogiannis2015optimality}, Koufogiannis and Pappas formulate a closely related notion of privacy for metric spaces called 
Lipchitz privacy.
The $d$-privacy is applied on a local differential privacy scenario in \cite{alvim2018local}.
\fi

A method to achieve $d$-privacy is the Laplace mechanism \cite{Dwork06DP,Dwork14Book,Chatzikokolakis13Dprivacy}.
Let $\mathrm{Laplace}(\mu,b)$ be the Laplace distribution with location $\mu$ and scale $b$, with a density function $\frac{1}{2b}e^{-\frac{|x - \mu|}{b}}$. The Laplace mechanism is defined as follows:
\iflongver
\medskip
\else
\vspace{-6pt}
\fi
\begin{defn}[Laplace mechanism \cite{Dwork06DP,Dwork14Book,Chatzikokolakis13Dprivacy}] \label{def:LaplaceM}
Let the data be $\mathbf{x}\in \mathcal{X}=\mathbb{R}^n$. The \emph{Laplace mechanism} is
\ifshortver
$\mathcal{A}(\mathbf{x},\epsilon) := \mathbf{x} + \mathbf{Z}$,
\else
\[
\mathcal{A}(\mathbf{x},\epsilon) := \mathbf{x} + \mathbf{Z},
\]
\fi
where $\mathbf{Z}_i \stackrel{iid}{\sim} \mathrm{Laplace}(0,1/\epsilon)$ for $i=1,\cdots,n$. 
\end{defn}
\iflongver
\medskip
\else
\vspace{-6pt}
\fi
It is shown in \cite{Chatzikokolakis13Dprivacy} that the Laplace mechanism satisfies $\epsilon \cdot d$-privacy, where $d(\mathbf{x},\mathbf{x}')=\Vert \mathbf{x}-\mathbf{x}'\Vert_1$ is the $\ell_1$-metric.

\iflongver
The Laplace mechanism can be applied to centralized differential privacy as well \cite{Dwork06DP,Dwork14Book}. For a query function $f : \mathcal{X} \to \mathbb{R}^n$, applying the Laplace mechanism $\mathcal{A}(f(x),\epsilon / \Delta f )$ on $f(x)$, we can achieve $\epsilon$-differential privacy, where $\Delta f$ is the $\ell_1$-\emph{sensitivity} of $f$ defined as:
\[
\Delta f:= \max_{x, x' \in \mathcal{X}: (x,x') \in \mathcal{R}}  \Vert f(x)-f(x')\Vert_1 
\]
\fi

\iflongver
\medskip{}
\fi

\subsection{Subtractive Dithering}
We review the subtractively dithered quantization 
\ifshortver
\cite{gray1998quantization}. 
\else
\cite{gray1998quantization,gray1993dithered}. 
\fi
The encoder and decoder share a random variable $U \sim \mathrm{Unif}(-\frac{1}{2},\frac{1}{2})$. Fix a quantization step size $\delta>0$. To encode $x \in \mathbb{R}$, the encoder gives $M=E_\qs(x,U):=\mathrm{round}(x/\delta - U)$, where $\mathrm{round}(y)$ is the closest integer to $y$. The decoder uses $(M,U)$ to recover $\hat{X}=D_\qs(M,U):=\delta(M+U)$. We have $\hat{X}-x \sim \mathrm{Unif}(-\delta/2,\delta/2)$ for every $x\in \mathbb{R}$. The behavior of the quantization is the same as applying an additive uniformly distributed noise independent of the input.

\ifshortver
\else
Subtractive dithering can be generalized to vector lattice quantization~\cite{kirac1996results,zamir2014,shlezinger2020uveqfed}, though this is beyond the scope of this paper.
\fi

\iflongver
\medskip{}
\fi

\section{Problem Formulation}
\iflongver
We present the setting in this paper as a local differential privacy scenario. 
\fi
The user holds the data $x \in \mathcal{X}$, where $\mathcal{X}$ is the set of possible data. The user and the decoder have a shared randomness $U$. We allow unlimited shared randomness and allow $U$ to follow a continuous distribution. This is the same assumption as subtractive dithering \cite{gray1998quantization,hegazy2022randomized,lang2023joint} (where $U$ is the uniform dither) and some channel simulation schemes \cite{harsha2010communication,sfrl_trans}, 
\ifshortver
and is realistic since a pseudorandom number generator can be initialized by a small seed shared by the user and the decoder.
\else
and is a realistic assumption since most practical pseudorandom number generator can be initialized by a small seed (e.g. $64$ bits)
, and the user and the decoder only need to share a small
seed in order to provide a practically unlimited stream of shared random numbers for all the communication tasks that arise later.
\fi

The user will transmit a description $M=E(x,U,V) \in \mathcal{M}$ to the decoder, where $E$ is an encoding function, and $V$ is the (unlimited) local randomness at the user. The description is required to be in a discrete space $\mathcal{M}$, which is often taken to be either $\mathbb{Z}$ or the set of variable-length bit sequences $\{0,1\}^*$. 
\iflongver
Note that one can encode $M\in \mathbb{Z}$ into a variable-length bit sequence using codes such as the signed Elias delta code~\cite{elias1975universal}. 
\fi

After observing $M$ and $U$, the decoder will recover $\hat{X} = D(M,U)$, where $D$ is a decoding function. Our goal is to have $\hat{X}$ following a prescribed conditional distribution given $x$ that ensures privacy. We will focus on the situation where $x = \mathbf{x} \in \mathbb{R}^n$ is a vector, and $\hat{X} = \hat{\mathbf{X}} \in \mathbb{R}^n$ where the components of the error $\hat{\mathbf{X}}_i-\mathbf{x}_i$ (for $i = 1,\ldots,n$) are required to be i.i.d. following $\mathrm{Laplace}(0,1/\epsilon)$, the Laplace distribution centered at $0$ with scale $1/\epsilon$, with density function $e^{-\epsilon |x|}\cdot \epsilon/2$, for every value of $\mathbf{x} \in \mathbb{R}^n$. This ensures that the outward behavior of the whole mechanism is the same as that of the Laplace mechanism \cite{Dwork06DP,Dwork14Book}, except that we now only require the transmission of $M$ which can be encoded into a finite number of bits. In particular, the utility of $\hat{\mathbf{X}}$ is the same as that of the Laplace mechanism.

We say that the mechanism achieves a $\epsilon_1 \cdot d$\emph{-privacy against the database}, if the overall mechanism mapping $\mathbf{x}$ to $\hat{\mathbf{X}}$ satisfies $\epsilon_1 \cdot d$-privacy in Definition~\ref{def:dprivacy}, where $d(\mathbf{x},\mathbf{x}')=\Vert \mathbf{x}-\mathbf{x}'\Vert_1$ is the $\ell_1$ metric. 
\ifshortver
This can be guaranteed by the aforementioned Laplace mechanism requirement.
\else
This is the privacy of the data against any party who can only access $\hat{\mathbf{X}}$ (e.g. the database that stores the noise corrupted data of users). If we can indeed realize the aforementioned Laplace mechanism, then we have a $\epsilon \cdot d$-privacy against the database.
\fi

We say that the mechanism achieves a $\epsilon_2 \cdot d$\emph{-privacy against the decoder} if the random mapping from $\mathbf{x}$ to $(U,M)$ satisfies $\epsilon_2 \cdot d$-privacy where $d$ is the $\ell_1$ metric. This is the privacy of the data against the decoder who can access $(U,M)$. 
\iflongver
This is important when the decoder is not completely trusted. Refer to the introduction for a discussion. 
\fi
The aforementioned Laplace mechanism requirement is not sufficient to guarantee privacy against the decoder. 
\iflongver
For example, while the randomized dithered quantization scheme in~\cite{hegazy2022randomized} can realize the Laplace mechanism, it leaks too much information when the random quantization step size happens to be small, which exposes a precise approximation of the data. This will be the main difficulty in the design and analysis of our new mechanism.
\fi

\ifshortver
\else
We remark that our mechanism applies to centralized differential privacy as well (as discussed in the introduction), where the user becomes the data curator, and the decoder becomes the data analyst.
\fi

\iflongver
\medskip{}
\fi

\section{Quantized Piecewise Linear Mechanism}

We first consider the case where $x \in \mathbb{R}$ is a scalar, and the desired distribution of the error $\hat{X}-x$ is not Laplace, but the following piecewise linear probability density function
\begin{equation}
f(x):=\frac{1}{\qs}\sum_{k\in\mathbb{Z}}a_{k}\mathrm{tri}\left(\frac{x}{\qs}-k\right),\label{eq:piecewise_linear}
\end{equation}
where $\delta>0$ is the step size, $\mathrm{tri}(x)=\max\{1-|x|,\,0\}$ is the triangle function,
and $a_{k}\ge0$ with $\sum_{k}a_{k}=1$. Note that $f$ is the linear interpolation of the points $(k\delta, a_k/\delta)$ for $k\in \mathbb{Z}$. Since $f$ is the convolution between $\mathrm{rect}(x/\delta)/\delta$ and the step function
\ifshortver
$\tilde{f}(x):=\frac{1}{\qs}\sum_{k\in\mathbb{Z}}a_{k}\mathrm{rect}\left(\frac{x}{\qs}-k\right)$,
\else
\[
\tilde{f}(x):=\frac{1}{\qs}\sum_{k\in\mathbb{Z}}a_{k}\mathrm{rect}\left(\frac{x}{\qs}-k\right),
\]
\fi
where $\mathrm{rect}(x)=\mathbf{1}\{|x|\le1/2\}$ is the rectangular
function (the probability density function of $\mathrm{Unif}(-1/2,1/2)$), the user can simply apply a local noise with density $\tilde{f}$ to $x$ before using a subtractively dithered quantizer with step size $\delta$ to achieve an overall error distribution $f$. 

We now define the \emph{quantized piecewise linear mechanism} on $f$. Assuming that the local randomness is $(V,W)$, where $V \sim \mathrm{Unif}(0,1)$ independent of $W \sim \mathrm{Unif}(-1/2,1/2)$, the encoding and decoding functions can be constructed as
\begin{align}
&E_{f,\qs}(x,u,(v,w)) \nonumber \\
&:=\mathrm{round}\Big(\frac{x}{\delta} + \min\Big\{\! k\in\mathbb{Z}:\!\!\!\sum_{i=-\infty}^{k}\!\! a_i \ge v\Big\} +w - u\Big) ,\label{eq:piecewise_enc}
\end{align}
\ifshortver
and $D_{\qs}(m,u) :=\qs(m+u)$.
\else
\[
D_{\qs}(m,u) :=\qs(m+u).
\]
\fi
Note that $\min\{ k\in\mathbb{Z}:\,\sum_{i=-\infty}^{k}a_i \ge V\}$ is simply a method to generate a random integer with probabilities $a_i$ using $V \sim \mathrm{Unif}(0,1)$. 
\iflongver
Hence $\min\{ \cdots \} + W$ follows the distribution $\tilde{f}$ scaled by $1/\delta$. 
\fi

\ifshortver
\else
As an intermediate step, we prove the privacy of the quantized piecewise linear mechanism.

\medskip
\begin{lem}
\label{lem:piecewise_lip}The quantized piecewise linear mechanism achieves $\epsilon \cdot d$-privacy against the decoder and the database, where $d(x,x')=|x-x'|$, and
\[
\epsilon = \max_{i,j\in\mathbb{Z}:\,|i-j|=1}\frac{1}{\qs}\Big(\frac{a_{j}}{a_{i}}-1\Big).
\]
\end{lem}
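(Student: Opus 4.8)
The plan is to reduce \emph{both} privacy claims to a single elementary fact: that $\log f$ is $\epsilon$-Lipschitz on $\mathbb{R}$. We may assume $\epsilon<\infty$, the statement being vacuous otherwise; then no $a_k$ vanishes, since a vanishing $a_i$ next to a positive $a_j$ would make $a_j/a_i$ infinite, forcing $\{k:a_k=0\}$ to be all of $\mathbb{Z}$, contradicting $\sum_k a_k=1$. Hence $f>0$ everywhere and $\log f$ is absolutely continuous. On $[k\delta,(k+1)\delta]$ we have $f(x)=\tfrac1\delta\bigl(a_k+(a_{k+1}-a_k)(x/\delta-k)\bigr)$, so
\begin{equation*}
\left|\frac{f'(x)}{f(x)}\right|=\frac1\delta\cdot\frac{|a_{k+1}-a_k|}{a_k+(a_{k+1}-a_k)(x/\delta-k)},
\end{equation*}
which is largest at the endpoint of the interval where the affine positive denominator is smaller, namely $\tfrac1\delta\bigl(\max(a_k,a_{k+1})/\min(a_k,a_{k+1})-1\bigr)$. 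Taking the supremum over $k\in\mathbb{Z}$ gives $|(\log f)'|\le\epsilon$ almost everywhere, and therefore $f(s)\le e^{\epsilon|s-t|}f(t)$ for all $s,t\in\mathbb{R}$.

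\emph{Database privacy.} By construction $\hat X-x$ has density $f$ for every $x$: applying the local noise of density $\tilde f$ followed by the subtractively dithered quantizer of step $\delta$ yields an error distributed as $\tilde f\ast\bigl(\mathrm{rect}(\cdot/\delta)/\delta\bigr)=f$. Hence for every $\mathcal S\subseteq\mathbb{R}$, $\mathbb{P}(\hat X\in\mathcal S\mid X=x)=\int_{\mathcal S}f(y-x)\,dy\le e^{\epsilon|x-x'|}\int_{\mathcal S}f(y-x')\,dy$ by the Lipschitz bound with $s=y-x$, $t=y-x'$ and $|s-t|=|x-x'|$. This is $\epsilon\cdot d$-privacy against the database.

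\emph{Decoder privacy.} Since $U$ is drawn independently of $(X,V,W)$, the joint law of $(U,M)$ given $X=x$ factors as $f_U(u)\,\mathbb{P}(M=m\mid X=x,U=u)$, and the factor $f_U(u)$ cancels in every privacy ratio; it thus suffices to bound $\mathbb{P}(M=m\mid X=x,U=u)$ pointwise. Write $K:=\min\{k:\sum_{i\le k}a_i\ge V\}$, so that $\mathbb{P}(K=k)=a_k$, independent of $W\sim\mathrm{Unif}(-1/2,1/2)$; then $K+W$ has the step density $g(t)=\sum_k a_k\mathbf 1\{|t-k|\le1/2\}$, and from the definition of $E_{f,\delta}$ we get $M=m$ iff $K+W\in[c,c+1)$ with $c:=m-\tfrac12-x/\delta+u$. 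Therefore $\mathbb{P}(M=m\mid X=x,U=u)=\int_c^{c+1}g(t)\,dt$, and a short computation (splitting the unit window at the single half-integer breakpoint of $g$ inside it) gives $\int_c^{c+1}g=\delta f\bigl(\delta(c+\tfrac12)\bigr)=\delta f\bigl(\delta(m+u)-x\bigr)=\delta f\bigl(D_{\delta}(m,u)-x\bigr)$; in words, the probability of recovering a fixed $\hat x=D_{\delta}(m,u)$ equals $\delta$ times the error density at $\hat x-x$. Applying the Lipschitz bound with $s=D_{\delta}(m,u)-x$ and $t=D_{\delta}(m,u)-x'$ yields the required inequality, so $\epsilon\cdot d$-privacy against the decoder holds.

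The only non-routine step is the identity $\int_c^{c+1}g(t)\,dt=\delta f\bigl(\delta(c+\tfrac12)\bigr)$: one must keep track of the half-integer discontinuities of the step density $g$, carry out the one-window integral, and recognize the resulting piecewise-linear function of $c$ as a shifted, rescaled copy of $f$ (a quick consistency check is $\sum_m\delta f(\delta(m+u)-x)=1$, which follows from $\sum_{m\in\mathbb Z}\mathrm{tri}(m+t)=1$). Everything else is the one-line Lipschitz estimate. It is worth noting that in this particular mechanism $(U,M)$ and $\hat X$ determine each other, so it is no accident that the two privacy parameters coincide; this fails once one mixes over step sizes to realize the Laplace mechanism, which is precisely why decoder privacy has to be argued anew there.
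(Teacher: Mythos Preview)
Your proof is correct and follows essentially the same route as the paper's: both establish that $\log f$ is $\epsilon$-Lipschitz on each interval $[k\delta,(k+1)\delta]$, then compute the conditional law of $(U,M)$ given $x$ to be $\delta f(\delta(m+u)-x)$ and invoke that Lipschitz bound again. The paper obtains the key identity $\mathbb{P}(M=m\mid x,u)=\delta f(\delta(m+u)-x)$ in one line by recognizing the unit-window integral of the step density as the convolution $\tilde f\ast(\mathrm{rect}(\cdot/\delta)/\delta)=f$, whereas you compute the same integral by hand across the single half-integer breakpoint; your explicit observation that $\epsilon<\infty$ forces all $a_k>0$, and your remark that $(U,M)\leftrightarrow\hat X$ is a bijection here (so the two privacy parameters must coincide), are nice additions the paper omits.
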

\begin{proof}
We first find the Lipschitz constant of $\ln f(x)$ for $0\le x \le \delta$. For $0\le x \le \delta$,
\begin{align*}
\Big|\frac{\mathrm{d}}{\mathrm{d}x}\ln f(x)\Big| &= \Big|\frac{\mathrm{d}}{\mathrm{d}x}\ln \Big( \frac{a_0}{\delta} + x\frac{a_1-a_0}{\delta^2} \Big) \Big|\\
&= \Big|\frac{a_1-a_0}{a_0 \delta + x(a_1-a_0)}\Big|\\
& \le \max\Big\{ \Big|\frac{a_1-a_0}{a_0 \delta }\Big| ,\, \Big|\frac{a_1-a_0}{a_0 \delta + \delta(a_1-a_0)}\Big| \Big\} \\
& = \max\Big\{ \frac{1}{\qs}\Big(\frac{a_{1}}{a_{0}}-1\Big) ,\, \frac{1}{\qs}\Big(\frac{a_{0}}{a_{1}}-1\Big) \Big\}.
\end{align*}
The other intervals for $x$ are similar. Hence the Lipschitz constant of $\ln f(x)$ is bounded by $\epsilon$. This gives the $\epsilon \cdot d$-privacy against the database. For privacy against the decoder, the conditional density function of $(U,M)$ given $x$ is
\begin{align*}
&f_{U,M|x}(u,m|x) \\
&= p_{M|x,U}(m|x,u)f_{U|x}(u|x) \\
&= p_{M|x,U}(m|x,u) \\
&\stackrel{(a)}{=} \mathbb{P}\Big( \Big|\frac{x}{\delta} + \min\Big\{ k\in\mathbb{Z}:\!\!\! \sum_{i=-\infty}^{k}\!\!\! a_i \ge V \!\Big\} +W - u -m\Big| \le \frac{1}{2} \Big) \\
&\stackrel{(b)}{=} \int_{m+u-x/\qs-1/2}^{m+u-x/\qs+1/2} \delta \tilde{f}(\delta y) \mathrm{d}y \\
&= \int_{\delta(m+u-1/2)-x}^{\delta(m+u+1/2)-x} \tilde{f}(y) \mathrm{d}y \\
&\stackrel{(c)}{=} \delta  f(\delta(m+u)-x),
\end{align*}
where (a) is by \eqref{eq:piecewise_linear}, (b) is because $\min\{ \cdots \} + W$ follows the distribution $\tilde{f}$ scaled by $1/\delta$, and (c) is because $f$ is the convolution of $\tilde{f}$ and $\mathrm{rect}(x/\delta)/\delta$. Hence $\ln f_{U,M|x}(u,m|x)$ is $\epsilon$-Lipschitz in $x$.
\end{proof}
\medskip

We remark that the general approach of applying local noise before quantization is common in conventional (non-subtractive) dithering
\ifshortver
\cite{gray1998quantization},
\else
\cite{gray1998quantization,gray1993dithered},
\fi
and has also been used in \cite{lang2023joint}. The novelty in this paper lies in the design of the local noise distribution, and the randomization procedure in the next section.

\fi

\iflongver
\medskip{}
\fi

\section{Proposed Privacy Mechanism}

Previously, we have discussed a mechanism for a piecewise linear error distribution.
Unfortunately, the Laplace distribution is not
piecewise linear. Nevertheless, we can still construct a mechanism by
decomposing the Laplace distribution into a mixture of piecewise linear
distributions for different step sizes $\qs$'s, and apply the quantized piecewise linear mechanism on a randomly chosen component of the mixture.
\ifshortver
\else
Basically, we will find a sequence of distributions $f_0,f_{1},f_{2},\ldots$,
where $f_{i}$ is piecewise linear with step size $\qs_i$ in the form \eqref{eq:piecewise_linear}, and $\sum_{i}\gamma_{i}f_{i}(x)$
is the Laplace distribution for some weights $\gamma_{i}\ge0$, $\sum_{i}\gamma_{i}=1$.
We will first generate a shared randomness $T$ with $\mathbb{P}(T=t)=\gamma_{t}$,
and then apply the quantized piecewise linear mechanism on $f_{T}$. Since both
the encoder and the decoder knows $T$, they know $\qs_{T}$ and
$f_{T}$, and can perform the encoding and decoding of the quantized piecewise linear mechanism on $f_{T}$. The resultant error distribution will be the
Laplace distribution.
\fi

Define the \emph{piecewise linear Laplace distribution}
\begin{align}
g_{\qs}(x) :=\frac{1}{c_{\qs}}\sum_{k\in\mathbb{Z}}e^{-|k\qs|}\mathrm{tri}\left(\frac{x}{\qs}-k\right), \label{eq:g_delta}
\end{align}
where $\qs>0$, and
\ifshortver
$c_{\qs} :=\qs\frac{1+e^{-\qs}}{1-e^{-\qs}}$.
\else
\begin{align*}
c_{\qs} & :=\qs\sum_{k\in\mathbb{Z}}e^{-|k\qs|}=\qs\frac{1+e^{-\qs}}{1-e^{-\qs}}.
\end{align*}
\fi
This is in the form \eqref{eq:piecewise_linear} by taking $a_k=\delta e^{-|k\qs|}/c_{\qs}$. This is an approximate version of the Laplace distribution that is piecewise linear, and tends to the Laplace distribution as $\qs \to 0$.

We now define the \emph{dyadic quantized Laplace (DQL) mechanism}. It has two parameters:  the \emph{database privacy parameter}
$\epsilon>0$, and the \emph{decoder privacy relaxation parameter} $\ell > 1$. We first consider the scalar case $x \in \mathbb{R}$. Let $\delta_0 > 0$ be the solution to 
the equation $e^{\delta_{0}}=\delta_{0}\ell+1$, and $\delta_{t}:=2^{-t}\delta_{0}$ for $t \in \mathbb{Z}_{>0}$. The shared randomness are $U\sim\mathrm{Unif}(-1/2,1/2)$
and $T\in\mathbb{Z}_{\ge0}$ which is a random variable with a cdf $F_{T}(t)=\mathbb{P}(T\le t)$,
\begin{align}
F_{T}(t) &:=\!\!\prod_{i=t+1}^{\infty}\frac{4-4(\delta_{i}\ell+1)e^{-\delta_{i}}}{(1+e^{-\delta_{i}})^{2}\big(2(1+e^{-2\delta_{i}})^{-1}\! -\delta_{i}\ell-1\big)} \label{eq:F_T}
\end{align}
for $t=-1,0,1,\ldots$. Given $x,T,U$ and the local randomness
$V\sim\mathrm{Unif}(0,1)$, $W\sim\mathrm{Unif}(-1/2,1/2)$, the encoder produces the description
\begin{align*}
M &= E_{\mathrm{DQL}}(x,(T,\! U),(V,\! W))  := E_{f_{T},\,\qs_T}(\epsilon x,U,(V,\! W)),
\end{align*}
where 
\begin{align}
f_{t}(x)=\frac{F_{T}(t)g_{\qs_t}(x)-F_{T}(t-1)g_{\qs_{t-1}}(x)}{F_{T}(t)-F_{T}(t-1)} \label{eq:f_T}
\end{align}
is a piecewise linear density function in the form \eqref{eq:piecewise_linear} with step size $\qs_t$, and $E_{f,\qs}$
is given in \eqref{eq:piecewise_enc}. It will be proved in Theorem~\ref{THM:PRIVACY} that $f_t(x)\ge 0$. Given $M,T,U$, the decoder recovers
\begin{align}
\hat{X}=D_{\mathrm{DQL}}(M, (T,U)) := \qs_T (M+U)/\epsilon. \label{eq:dyadic_dec}
\end{align}
Refer to Fig.~\ref{fig:lerp_1} for an explanation. The vector case, with input $\mathbf{x}\in \mathbb{R}^n$ and output $\hat{\mathbf{X}}\in \mathbb{R}^n$, can be obtained by applying this mechanism on each entry separately. 
\ifshortver
\else
Consider $\mathbf{x} \in \mathbb{R}^n$. The shared randomness are $\mathbf{U} \in \mathbb{R}^n$, $\mathbf{U} \stackrel{iid}{\sim}\mathrm{Unif}(-1/2,1/2)$
and $\mathbf{T} \in \mathbb{Z}_{\ge 0}^n$, $\mathbf{T} \stackrel{iid}{\sim} p_T$ (where $p_T$ is the distribution defined in the scalar case). Given $\mathbf{x},\mathbf{T},\mathbf{U}$ and the local randomness
$\mathbf{V}\stackrel{iid}{\sim}\mathrm{Unif}(0,1)$, $\mathbf{W}\stackrel{iid}{\sim}\mathrm{Unif}(-1/2,1/2)$, the encoder produces the description
\begin{align*}
\mathbf{M} &= E_{\mathrm{DQL}}(\mathbf{x},(\mathbf{T},\mathbf{U}),(\mathbf{V},\mathbf{W})) \\
&:= (E_{\mathrm{DQL}}(\mathbf{x}_i,(\mathbf{T}_i,\mathbf{U}_i),(\mathbf{V}_i,\mathbf{W}_i)))_{i =1,\ldots,n}.
\end{align*}
Given $\mathbf{M},\mathbf{T},\mathbf{U}$, the decoder recovers
\[
\hat{\mathbf{X}}=D_{\mathrm{DQL}}(\mathbf{M}, (\mathbf{T},\mathbf{U})) := (D_{\mathrm{DQL}}(\mathbf{M}_i, (\mathbf{T}_i,\mathbf{U}_i)))_{i =1,\ldots,n}.
\]
\fi

\ifshortver
\begin{figure}
\begin{centering}
\includegraphics[scale=0.217]{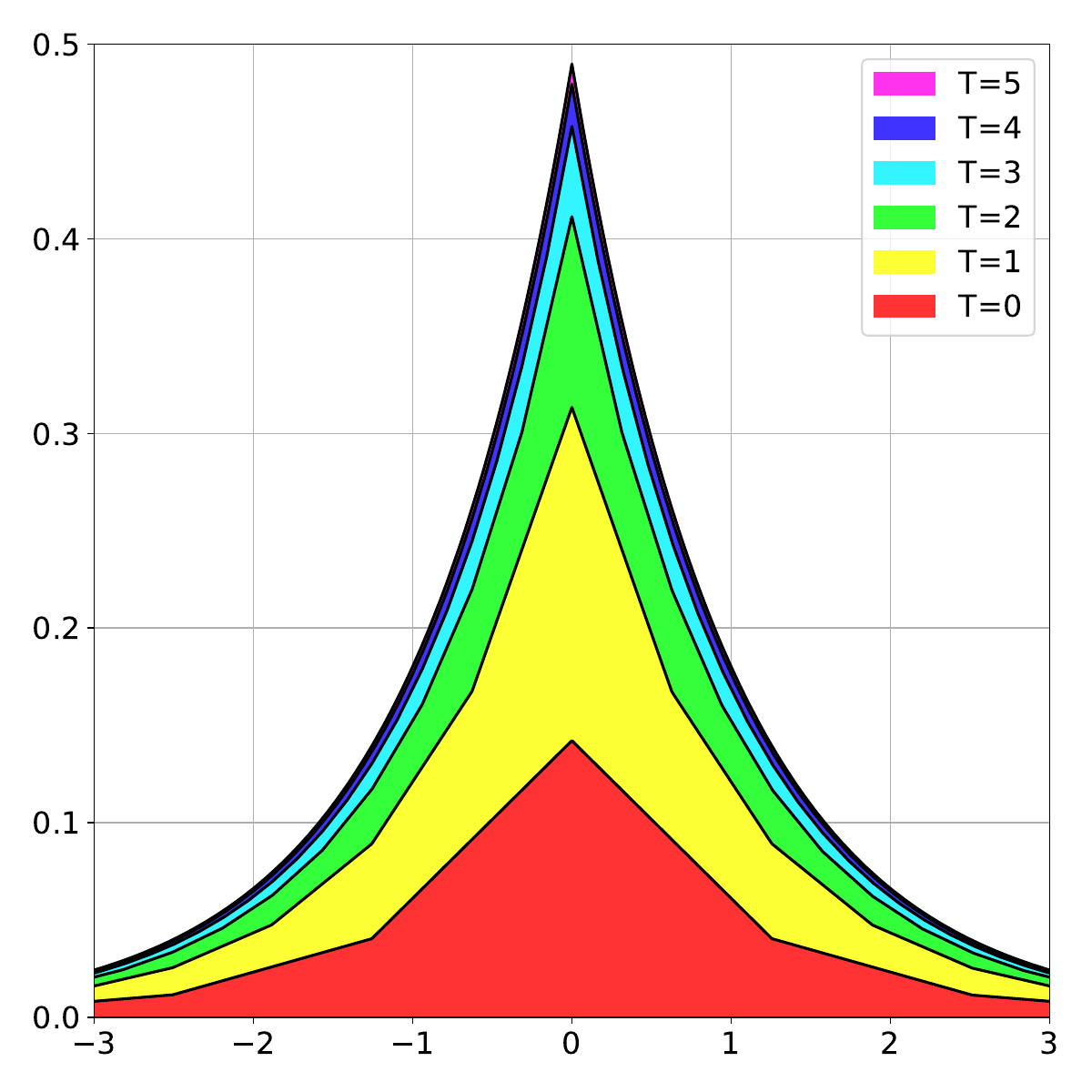}\includegraphics[scale=0.217]{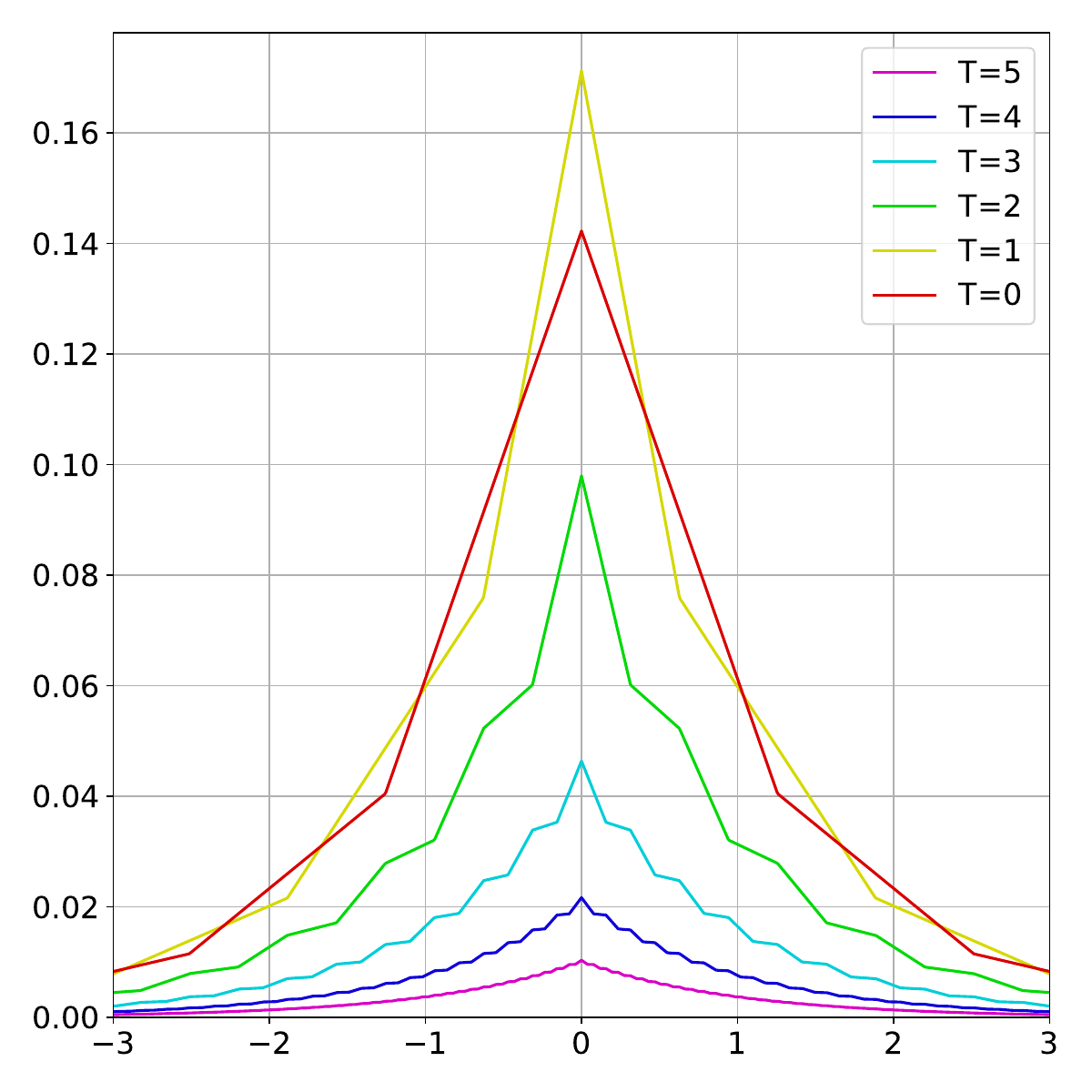}
\vspace{-22pt}
\par\end{centering}
\caption{\label{fig:lerp_1}Left: Plot of $F_{T}(t)g_{2^{-t}\qs_0}(x)$ for
different $t$'s, where $b=1$, $\ell=2$.
$p_{T}(1)f_{1}(x)$ is the red area below $F_{T}(1)g_{2^{-1}\qs_0}(x)$,
 $p_{T}(2)f_{2}(x)$ is the yellow area between the curves $F_{T}(1)g_{2^{-1}\qs_0}(x)$
and $F_{T}(2)g_{2^{-2}\qs_0}(x)$, etc. DQL can be regarded as picking an area (red, yellow, etc.) with probability equal to their areas, and
applying the quantized piecewise linear mechanism over that area. These
areas add up to the Laplace distribution. Right: Plot of $p_{T}(t)f_{t}(x)$ for different $t$'s, which add up to Laplace.
Each curve is a privacy-preserving noise with $\ln f_{t}(x)$ being
Lipschitz continuous.}
\end{figure}
\else
\begin{figure*}
\begin{centering}
\includegraphics[scale=0.3]{diffquant_fig1a}\includegraphics[scale=0.3]{diffquant_fig2a}\includegraphics[scale=0.3]{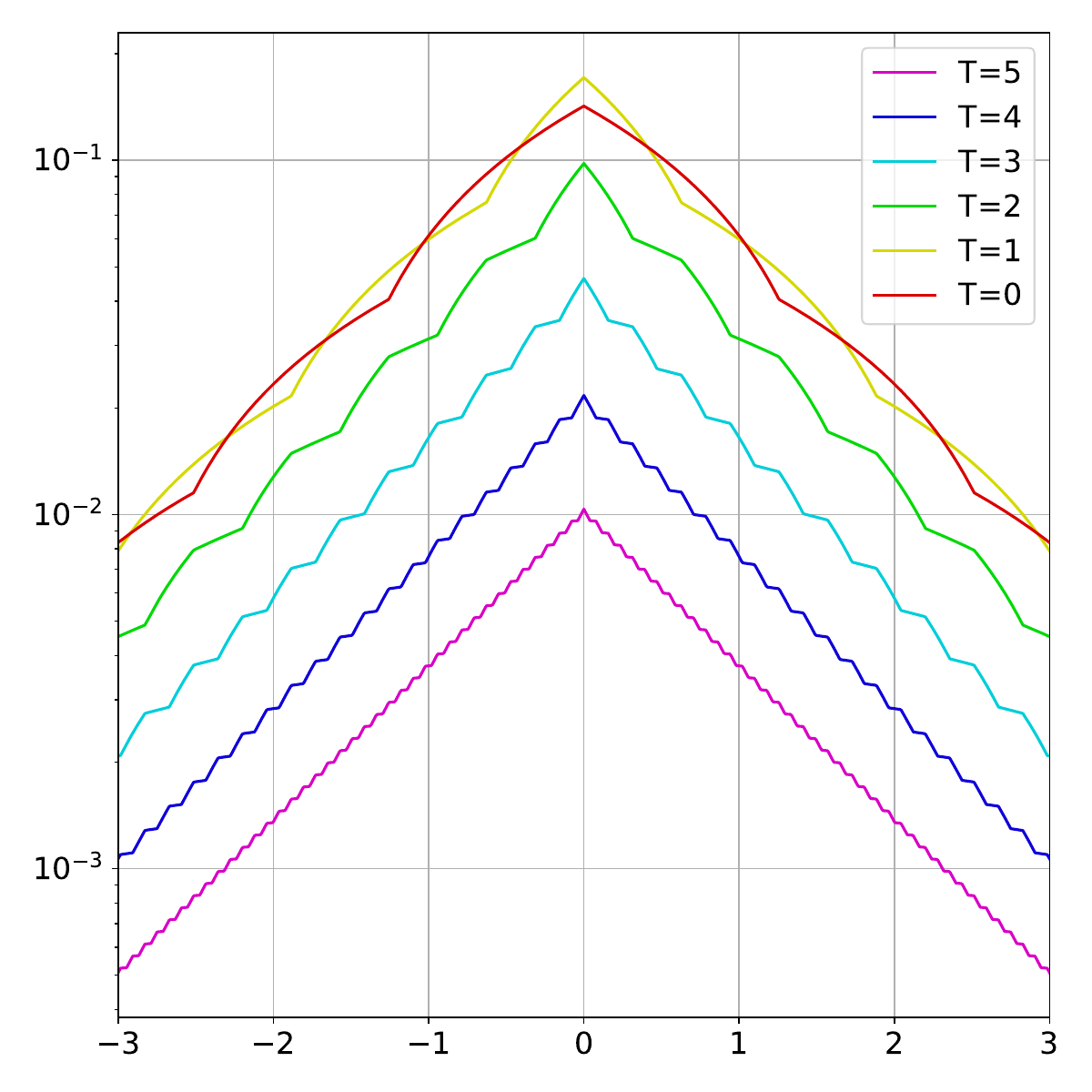}
\par\end{centering}
\caption{\label{fig:lerp_1}Left: Plot of $F_{T}(t)g_{2^{-t}\qs_0}(x)$ for
different values of $t$, where $b=1$, $\ell=2$. Note that
$p_{T}(1)f_{1}(x)$ is the red area below $F_{T}(1)g_{2^{-1}\qs_0}(x)$,
and $p_{T}(2)f_{2}(x)$ is the yellow area between the curves $F_{T}(1)g_{2^{-1}\qs_0}(x)$
and $F_{T}(2)g_{2^{-2}\qs_0}(x)$, etc. The dyadic quantized
Laplace mechanism can be regarded as first picking an area (red, yellow,
green, etc.) with probability equal to their areas, and then
applying the quantized piecewise linear mechanism over that area. These
areas add up to the Laplace distribution. Middle: Plot of $p_{T}(t)f_{t}(x)$ for different values
of $t$. These piecewise-linear curves add up to the Laplace distribution.
Right: Plot of $p_{T}(t)f_{t}(x)$ in log scale for different values
of $t$, showing that $\log p_{T}(t)f_{t}(x)$ is Lipschitz continuous.}
\end{figure*}
\fi


\iflongver
It is straightforward to check that $\hat{\mathbf{X}}-\mathbf{x}$ has the desired Laplace distribution. The proof is omitted.
\fi
\begin{prop}
The DQL mechanism has 
\ifshortver
a 
\else
an overall quantization 
\fi
noise $\hat{\mathbf{X}}_i-\mathbf{x}_i\sim\mathrm{Laplace}(0,1/\epsilon)$ i.i.d. for $i=1,\ldots,n$, for every 
\iflongver
given input 
\fi
$\mathbf{x} \in \mathbb{R}^n$.
\end{prop}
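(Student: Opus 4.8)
The plan is to first reduce to the scalar case. Since the vector DQL mechanism applies the scalar mechanism to each coordinate $\mathbf{x}_i$ with its own independent shared randomness $(\mathbf{T}_i,\mathbf{U}_i)$ and independent local randomness $(\mathbf{V}_i,\mathbf{W}_i)$, the components of $\hat{\mathbf{X}}-\mathbf{x}$ are mutually independent, and it suffices to prove that in the scalar case $\hat X - x \sim \mathrm{Laplace}(0,1/\epsilon)$ for every $x\in\mathbb{R}$.

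For the scalar case I would condition on the shared randomness $T$. Given $T=t$, the mechanism is exactly the quantized piecewise linear mechanism with density $f_t$ and step size $\qs_t$ applied to the rescaled input $\epsilon x$, followed by the rescaling $\hat X = D_{\qs_t}(M,U)/\epsilon$ from \eqref{eq:dyadic_dec}. By the defining property of the quantized piecewise linear mechanism discussed when it was introduced (equivalently, the conditional density computation $f_{U,M|x}(u,m|x)=\qs f(\qs(m+u)-x)$ from the proof of Lemma~\ref{lem:piecewise_lip} shows, after summing over $m$ and integrating over $u$, that the error $D_\qs(M,U)-y$ has density $f$), the conditional law of $D_{\qs_t}(M,U)-\epsilon x$ given $T=t$ has density $f_t$. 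Hence, given $T=t$, $\hat X - x$ has conditional density $z\mapsto\epsilon f_t(\epsilon z)$, and unconditionally the density of $\hat X - x$ is $h(z)=\epsilon\sum_{t\ge 0}\big(F_T(t)-F_T(t-1)\big)f_t(\epsilon z)$.

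The heart of the argument is then a telescoping sum. Rearranging \eqref{eq:f_T} gives $\big(F_T(t)-F_T(t-1)\big)f_t(x)=F_T(t)g_{\qs_t}(x)-F_T(t-1)g_{\qs_{t-1}}(x)$, so the $N$-th partial sum collapses to $F_T(N)g_{\qs_N}(x)-F_T(-1)g_{\qs_{-1}}(x)$. The crucial observation is that $F_T(-1)=0$: putting $t=-1$ in \eqref{eq:F_T} produces a product whose $i=0$ factor has numerator $4-4(\qs_0\ell+1)e^{-\qs_0}$, which vanishes precisely because $\qs_0$ is defined as the root of $e^{\qs_0}=\qs_0\ell+1$. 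Letting $N\to\infty$, we have $\qs_N=2^{-N}\qs_0\to 0$; since $F_T$ is a cdf on $\mathbb{Z}_{\ge 0}$, $F_T(N)\to 1$, and I would invoke the pointwise limit $g_{\qs}(x)\to\tfrac12 e^{-|x|}$ as $\qs\to 0$ (noted after \eqref{eq:g_delta}; it holds because the interpolant $\sum_k e^{-|k\qs|}\mathrm{tri}(x/\qs-k)$ tends to $e^{-|x|}$ and $c_\qs=\qs(1+e^{-\qs})/(1-e^{-\qs})\to 2$). Combining these yields $h(z)=\epsilon\cdot\tfrac12 e^{-\epsilon|z|}$, the density of $\mathrm{Laplace}(0,1/\epsilon)$, which finishes the scalar case and hence the proposition.

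I expect the only genuine computation to be in the limits $g_\qs\to$ Laplace density and $F_T(N)\to 1$ (the latter reducing to checking that the factors of the product \eqref{eq:F_T} approach $1$ summably fast), whereas $F_T(-1)=0$ is immediate once one sees that $\qs_0$ was chosen exactly to annihilate the $i=0$ factor. Since the telescoping identity holds term-by-term for every finite $N$, no interchange of limit and summation is needed; in particular the nonnegativity $f_t\ge 0$ from Theorem~\ref{THM:PRIVACY} is not required for this proposition.
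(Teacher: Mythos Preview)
Your argument is correct and is precisely the ``straightforward'' verification the paper omits: condition on $T$, use that the quantized piecewise linear mechanism produces error density $f_T$, telescope via \eqref{eq:f_T}, observe that $F_T(-1)=0$ because $\qs_0$ is chosen to kill the $i=0$ factor, and pass to the limit $g_{\qs_N}\to \tfrac12 e^{-|\cdot|}$. This matches the intuition the paper sketches in the caption of Figure~\ref{fig:lerp_1} (the areas ``add up to the Laplace distribution''), so there is nothing to compare---you have simply written out the omitted proof.
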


\ifshortver
The privacy guarantees are as follows.
\else
Since we have $\hat{\mathbf{X}}_i-\mathbf{x}_i \stackrel{iid}{\sim} \mathrm{Laplace}(0,1/\epsilon)$, we have $\epsilon\cdot d$-privacy against the database. In the following theorem, we prove that it also provides $\ell\epsilon \cdot d$-privacy against the decoder.  
The proof is given in Appendix~\ref{subsec:privacy_pf}.
\fi

\iflongver
\medskip{}
\fi
\begin{thm} 
\label{THM:PRIVACY} For $\mathbf{x} \in \mathbb{R}^n$, the DQL mechanism satisfies
$\ell \epsilon \cdot d$-privacy against the decoder, and $\epsilon \cdot d$-privacy against the database, where the metric is $d(\mathbf{x},\mathbf{x}')=\Vert \mathbf{x} - \mathbf{x}' \Vert_1$.
\end{thm}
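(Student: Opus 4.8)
The plan is to handle the two privacy guarantees separately, with essentially all the work concentrated in the decoder bound. \emph{Database privacy} is immediate: by the Proposition above, $\hat{\mathbf{X}}_i-\mathbf{x}_i$ are i.i.d.\ $\mathrm{Laplace}(0,1/\epsilon)$ for every $\mathbf{x}$, so $\mathbf{x}\mapsto\hat{\mathbf{X}}$ is literally the Laplace mechanism, which is $\epsilon\cdot d$-private for the $\ell_1$ metric by \cite{Chatzikokolakis13Dprivacy}. For \emph{decoder privacy} we must show $\mathbf{x}\mapsto(\mathbf{T},\mathbf{U},\mathbf{M})$ is $\ell\epsilon\cdot d$-private. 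Since the coordinates are processed independently, the conditional density of $(\mathbf{T},\mathbf{U},\mathbf{M})$ given $\mathbf{x}$ factorizes over the $n$ coordinates; hence a pointwise log-likelihood-ratio bound that is $\ell\epsilon$-Lipschitz in each scalar coordinate sums, using $\Vert\cdot\Vert_1=\sum_i|\cdot|$, to the claimed bound, and it suffices to prove the scalar statement: $x\mapsto(T,U,M)$ on $\mathbb{R}$ is $\ell\epsilon\cdot d$-private with $d(x,x')=|x-x'|$.

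For the scalar case I would first write down the conditional law of $(T,U,M)$ given $x$. Conditioned on $T=t$, the mechanism is exactly the quantized piecewise linear mechanism of the previous section run on input $\epsilon x$ with density $f_t$ and step size $\qs_t$, so the density computation inside the proof of Lemma~\ref{lem:piecewise_lip} gives $f_{U,M\mid x,T=t}(u,m\mid x,t)=\qs_t\,f_t(\qs_t(m+u)-\epsilon x)$, and therefore
\[
f_{T,U,M\mid x}(t,u,m\mid x)=\qs_t\,h_t\big(\qs_t(m+u)-\epsilon x\big),\qquad h_t(y):=p_T(t)f_t(y)=F_T(t)g_{\qs_t}(y)-F_T(t-1)g_{\qs_{t-1}}(y).
\]
Everything then reduces to proving, for every $t\ge 0$: (i) $h_t\ge 0$ on $\mathbb{R}$ (which in particular gives $f_t\ge 0$, so $f_t$ is a legitimate density, as promised in the statement), and (ii) $\ln h_t$ is $\ell$-Lipschitz. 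Granting (i)--(ii), $\ln f_{T,U,M\mid x}(t,u,m\mid x)$ is $\ell\epsilon$-Lipschitz in $x$ uniformly in $(t,u,m)$, and integrating the resulting pointwise density ratio over any measurable event yields $\ell\epsilon\cdot d$-privacy against the decoder.

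To establish (i)--(ii), observe that $h_t$ is continuous and piecewise linear with nodes exactly at $\qs_t\mathbb{Z}$ (since $\qs_{t-1}=2\qs_t$) and even, so it is enough to control the nodal values $v_j:=h_t(j\qs_t)$ for $j\ge 0$. Evaluating $g_{\qs_t}$ and $g_{\qs_{t-1}}$ at $j\qs_t$ — the latter by linear interpolation when $j$ is odd, which is where a factor $\cosh\qs_t$ enters — gives, with $A_t:=F_T(t)/c_{\qs_t}$ and $B_t:=F_T(t-1)/c_{\qs_{t-1}}$,
\[
v_j=e^{-j\qs_t}(A_t-B_t)\ \ (j\ \text{even}),\qquad v_j=e^{-j\qs_t}(A_t-B_t\cosh\qs_t)\ \ (j\ \text{odd}).
\]
For a positive continuous piecewise-linear function with nodes at $\qs_t\mathbb{Z}$, $\ln h_t$ is $\ell$-Lipschitz exactly when each consecutive pair of nodal values satisfies $\max(v_j/v_{j+1},\,v_{j+1}/v_j)\le \ell\qs_t+1$, because the largest magnitude of $(\ln h_t)'$ on a linear piece is attained at the endpoint with the smaller value.

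The crux — and the only place the specific formula \eqref{eq:F_T} is used — is then the claim that, with that choice of $F_T$,
\[
A_t-B_t\cosh\qs_t=\frac{e^{\qs_t}}{\ell\qs_t+1}\,(A_t-B_t)>0\qquad\text{for all }t\ge 0.
\]
For $t=0$ one has $F_T(-1)=0$ (indeed $e^{\qs_0}=\ell\qs_0+1$ makes the numerator of the $i=0$ factor in \eqref{eq:F_T} vanish), so $B_0=0$ and the identity is just the defining equation of $\qs_0$; for $t\ge 1$ it follows by substituting $\qs_{t-1}=2\qs_t$, the closed form $c_\qs=\qs(1+e^{-\qs})/(1-e^{-\qs})$, and the ratio $F_T(t-1)/F_T(t)$ read off from the infinite product in \eqref{eq:F_T}, after which both sides collapse to the same elementary expression; positivity reduces to $F_T(t-1)/F_T(t)<c_{\qs_{t-1}}/c_{\qs_t}$, which in turn reduces to $(1-e^{-\qs_t})^2>0$. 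Given this identity, $v_{2k}/v_{2k+1}=\ell\qs_t+1$ (the Lipschitz bound is tight), $v_{2k+1}/v_{2k+2}=e^{2\qs_t}/(\ell\qs_t+1)\le\ell\qs_t+1$ since $e^{\qs_t}\le\ell\qs_t+1$, and the reciprocal ratios are $\le 1$; moreover every $v_j>0$, which is (i). The main obstacle is precisely this reverse-engineering verification — confirming that \eqref{eq:F_T} is exactly the weighting that drives the nodal ratios to the target $\ell\qs_t+1$ — which, once the $v_j$ are identified and the problem is reduced to nodal ratios, is an intricate but entirely mechanical algebraic computation.
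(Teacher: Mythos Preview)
Your proposal is correct and follows essentially the same approach as the paper: reduce to the scalar case, identify the conditional density at the decoder with $\delta_t f_t(\delta_t(m+u)-\epsilon x)$ via Lemma~\ref{lem:piecewise_lip}, compute the nodal values of the piecewise-linear function $h_t=F_T(t)g_{\delta_t}-F_T(t-1)g_{2\delta_t}$ at even and odd multiples of $\delta_t$, and then verify algebraically that the choice \eqref{eq:F_T} forces the consecutive nodal ratios to be bounded by $\ell\delta_t+1$. The only difference is organizational: the paper carries out the Lipschitz computation interval by interval and simplifies a long chain of fractions down to $\delta\ell$, whereas you isolate the single identity $A_t-B_t\cosh\delta_t=\tfrac{e^{\delta_t}}{\ell\delta_t+1}(A_t-B_t)$ and read off all ratios (and the positivity $f_t\ge 0$) from it together with $e^{\delta_t}\le\ell\delta_t+1$; both routes are the same verification that \eqref{eq:F_T} is reverse-engineered to make the bound tight.
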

\iflongver
\medskip{}
\fi

To encode the description $\mathbf{M}\in\mathbb{Z}^{n}$ into bits, we can use any code over integers. 
\ifshortver
In particular, if we apply the signed Elias delta code~\cite{elias1975universal} on each entry of $\mathbf{M}$, 
for large $\epsilon\Vert\mathbf{x}\Vert_1$ and a bounded $\ell$, the expected length is $n\log_2(\epsilon\Vert\mathbf{x}\Vert_1/n) + O(n\log\log(\epsilon\Vert\mathbf{x}\Vert_1/n) + n \log(1/(\ell - 1)))$.
\else
In particular, the following result bounds the communication cost if we apply the signed Elias delta code~\cite{elias1975universal}.

\begin{thm}
For vector data $\mathbf{x}\in\mathbb{R}^{n}$, applying signed Elias
delta code on each entry of the description $\mathbf{M}\in\mathbb{Z}^{n}$,
the expected description length (in bits) is upper-bounded by
\[
nL\left(\ln\left(\frac{2\epsilon\Vert\mathbf{x}\Vert_{1}}{n}+\frac{9}{8}\ln(2\ell\ln\ell+1)+2\right)+\mathbb{E}[-\ln\delta_{T}]\right),
\]
where
\[
L(z):=z\log_{2}e+2\log_{2}(z\log_{2}e+1)+1.
\]
Numerical evaluation gives the following upper bound on the expected
length:
\[
nL\!\bigg(\!\ln\!\Big(\frac{2\epsilon\Vert\mathbf{x}\Vert_{1}}{n}+\frac{9}{8}\ln(2\ell\ln\ell+1)+2\Big)+\ln\!\Big(\frac{e}{\ell\! -\! 1}+1\Big)-\frac{1}{2}\bigg).
\]
\end{thm}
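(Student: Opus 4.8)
The plan is to control the expected codeword length coordinate by coordinate and then assemble everything using the concavity of $L$, reducing the whole statement to a single inequality about the mean step size $\mathbb{E}[\delta_T]$.

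First I would record the relevant property of the signed Elias delta code: the codeword for an integer $m$ has length at most $L(\ln(2|m|+1))$, because the zig-zag map sends $m$ to a positive integer $N\le 2|m|+1$ and the Elias delta codeword for $N$ has length $\lfloor\log_2 N\rfloor+2\lfloor\log_2(\lfloor\log_2 N\rfloor+1)\rfloor+1\le L(\ln N)$, with $L$ increasing. A direct computation shows $L$ is increasing and concave on $[0,\infty)$ (indeed $L''<0$). Writing $\mathrm{len}(\mathbf{M}_i)$ for the $i$-th codeword length, Jensen's inequality (concavity of $L$) gives $\mathbb{E}[\mathrm{len}(\mathbf{M}_i)]\le L(\mathbb{E}[\ln(2|\mathbf{M}_i|+1)])$, and a second use of concavity over $i$ gives
\[
\mathbb{E}[\text{total length}]=\sum_{i=1}^{n}\mathbb{E}[\mathrm{len}(\mathbf{M}_i)]\le nL\Big(\tfrac{1}{n}\sum_{i=1}^{n}\mathbb{E}[\ln(2|\mathbf{M}_i|+1)]\Big),
\]
so it remains to bound the average of $\mathbb{E}[\ln(2|\mathbf{M}_i|+1)]$ by the argument of $L$ in the statement.

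For the per-coordinate estimate I would use the decoder identity \eqref{eq:dyadic_dec}, which gives $\mathbf{M}_i=\epsilon\hat{\mathbf{X}}_i/\delta_{\mathbf{T}_i}-\mathbf{U}_i$, hence $|\mathbf{M}_i|\le\epsilon|\hat{\mathbf{X}}_i|/\delta_{\mathbf{T}_i}+\tfrac12\le\epsilon(|\mathbf{x}_i|+|\hat{\mathbf{X}}_i-\mathbf{x}_i|)/\delta_{\mathbf{T}_i}+\tfrac12$, so that $2|\mathbf{M}_i|+1\le\delta_{\mathbf{T}_i}^{-1}\big(2\epsilon|\mathbf{x}_i|+2\epsilon|\hat{\mathbf{X}}_i-\mathbf{x}_i|+2\delta_{\mathbf{T}_i}\big)$. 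Taking logarithms, splitting the expectation, applying Jensen (concavity of $\ln$) to the bracketed term, and using $\epsilon\,\mathbb{E}|\hat{\mathbf{X}}_i-\mathbf{x}_i|=1$ (the Proposition, since $\hat{\mathbf{X}}_i-\mathbf{x}_i\sim\mathrm{Laplace}(0,1/\epsilon)$), I obtain $\mathbb{E}[\ln(2|\mathbf{M}_i|+1)]\le\mathbb{E}[-\ln\delta_{\mathbf{T}_i}]+\ln(2\epsilon|\mathbf{x}_i|+2+2\mathbb{E}[\delta_{\mathbf{T}_i}])$. Since the $\mathbf{T}_i$ are i.i.d.\ $\sim p_T$, averaging over $i$ and one further use of concavity of $\ln$ gives
\[
\tfrac{1}{n}\sum_{i}\mathbb{E}[\ln(2|\mathbf{M}_i|+1)]\le\mathbb{E}[-\ln\delta_T]+\ln\Big(\tfrac{2\epsilon\Vert\mathbf{x}\Vert_1}{n}+2+2\mathbb{E}[\delta_T]\Big).
\]
Comparing with the target, the entire theorem reduces to the single inequality $\mathbb{E}[\delta_T]\le\tfrac{9}{16}\ln(2\ell\ln\ell+1)$, and I expect this to be the main obstacle.

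I would prove this inequality in two pieces. For $\delta_0\le\ln(2\ell\ln\ell+1)$: the map $\ell\mapsto\ell-2\ln\ell-1/\ell$ vanishes at $\ell=1$ and has derivative $(1-1/\ell)^2\ge0$, so $\ell^2>2\ell\ln\ell+1$ for $\ell>1$, i.e.\ $e^{\ln(2\ell\ln\ell+1)}>\ell\ln(2\ell\ln\ell+1)+1$; since $\tau\mapsto e^{\tau}-\ell\tau-1$ is negative exactly on $(0,\delta_0)$, this forces $\delta_0\le\ln(2\ell\ln\ell+1)$. Second, a summation by parts using $F_T(-1)=0$ and $\delta_{t+1}=\delta_t/2$ gives $\mathbb{E}[\delta_T]=\sum_{t\ge0}F_T(t)(\delta_t-\delta_{t+1})=\tfrac{\delta_0}{2}\sum_{t\ge0}2^{-t}F_T(t)$, so it suffices to prove $\sum_{t\ge0}2^{-t}F_T(t)\le\tfrac98$. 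This is the delicate part: the trivial bound $F_T(t)\le1$ only yields $\le2$, so one must use the explicit form \eqref{eq:F_T} — simplifying $2(1+e^{-2\delta})^{-1}-\delta\ell-1=\tanh\delta-\delta\ell$ and $4-4(\delta\ell+1)e^{-\delta}=4e^{-\delta}(e^{\delta}-\delta\ell-1)$ shows each factor of \eqref{eq:F_T} lies in $(0,1)$ and behaves like $1-\tfrac{\delta_i}{2(\ell-1)}$ for small $\delta_i$ — and then bound the partial products $F_T(t)=\prod_{i>t}(\cdots)$ from above before summing the geometric series, eliminating $\ell$ via $e^{\delta_0}=\ell\delta_0+1$. (In the worst regime $\ell\to1^{+}$ the estimate degenerates to $\prod_{i\ge1}(1-2^{-i})\cdot\sum_{t\ge0}2^{-t}\big/\prod_{i=1}^{t}(1-2^{-i})\approx1.0<\tfrac98$, so the constant $\tfrac98$ is not wasteful.)

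Finally, for the explicit numerical form I would write $\mathbb{E}[-\ln\delta_T]=-\ln\delta_0+(\ln2)\,\mathbb{E}[T]$, use $\mathbb{E}[T]=\sum_{t\ge0}(1-F_T(t))$, and bound $F_T(t)$ from below from \eqref{eq:F_T} together with $e^{\delta_0}=\ell\delta_0+1$; maximizing the resulting one-variable expression over $\ell>1$ gives $\mathbb{E}[-\ln\delta_T]\le\ln\big(\tfrac{e}{\ell-1}+1\big)-\tfrac12$, and substituting this into the first bound (monotonicity of $L$) yields the stated numerical bound. Throughout, the logarithmic-code part (double Jensen) and the Laplace first-moment identity are routine; the bottleneck is the analytic and numerical control of the law of $T$ through the unwieldy closed form \eqref{eq:F_T}, used once for $\sum_{t\ge0}2^{-t}F_T(t)\le\tfrac98$ and once for the bound on $\mathbb{E}[-\ln\delta_T]$.
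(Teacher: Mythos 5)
Your overall strategy is close to the paper's — decode $M=\epsilon\hat X/\delta_T-U$, bound $\mathbb{E}\ln(\text{zig-zag}(M))$, use Jensen and concavity of $L$ — but there is one consequential divergence that introduces a gap the paper never has to face. The paper works with the exact zig-zag $2|M-1/4|+1/2$ and, on the additive term, bounds the \emph{step size} pointwise by $\delta_T\le\delta_0$ while keeping the \emph{dither} exact; the constant $\tfrac98$ then drops out of a one-line computation $2\,\mathbb{E}|U+1/4|+\tfrac12=\tfrac58+\tfrac12=\tfrac98$, after which $\delta_0\le\ln(2\ell\ln\ell+1)$ finishes. You do the opposite trade: you coarsen the dither to its worst case $|U|\le 1/2$ (which is why you get the clean $+2\delta_T$), but then you retain $\delta_T$ and must control its mean. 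This forces the new inequality $\mathbb{E}[\delta_T]\le\tfrac{9}{16}\ln(2\ell\ln\ell+1)$, equivalently $\sum_{t\ge0}2^{-t}F_T(t)\le\tfrac98$, which is nowhere near immediate from \eqref{eq:F_T} and which the paper never needs.

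That inequality is the gap. You flag it as ``the delicate part'' and offer a heuristic — the $\ell\to1^+$ limit giving $\prod_{i\ge1}(1-2^{-i})\sum_{t\ge0}2^{-t}/\prod_{i=1}^t(1-2^{-i})\approx 1.0$ — but you do not show that $\ell\to1^+$ is the worst case (numerically it is not: the sum increases above $1$ for moderate $\ell$ before decaying again), and the step ``bound the partial products $F_T(t)$ from above before summing the geometric series'' is not carried out; the trivial $F_T(t)\le1$ gives $2$, so some real use of the closed form is needed and is missing. Until $\sum_t 2^{-t}F_T(t)\le\tfrac98$ is actually proved, your argument does not reach the stated bound. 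The numerical check you propose for $\mathbb{E}[-\ln\delta_T]\le\ln(e/(\ell-1)+1)-\tfrac12$ mirrors what the paper does and is fine; the problem is that you have created a second, independent fact about the law of $T$ that also needs to be established. Switching to the paper's split — bound $\delta_T$ by $\delta_0$, keep $U$ exact — removes that obstruction entirely.
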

\begin{proof}
We first prove the scalar case. 
Consider the function
$\ell^{2}-2\ell\ln\ell-1$. We have $\frac{\mathrm{d}}{\mathrm{d}\ell}(\ell^{2}-2\ell\ln\ell-1)=2(\ell-\log\ell-1)\ge0$
for $\ell\ge1$. Hence $\ell^{2}-2\ell\ln\ell-1\ge0$ for $\ell\ge1$,
$\ell^{2}\ge2\ell\ln\ell+1$, and
\[
e^{\ln(2\ell\ln\ell+1)}=2\ell\ln\ell+1\ge\ell\ln(2\ell\ln\ell+1)+1.
\]
Since $\delta_{0}$ satisfies $e^{\delta_{0}}=\delta_{0}\ell+1$,
we have
\[
\delta_{0}\le\ln(2\ell\ln\ell+1).
\]
By~\eqref{eq:dyadic_dec}, we have
\[
M=\frac{\epsilon\hat{X}}{\delta_{T}}-U.
\]
Hence,
\begin{align*}
 & \mathbb{E}\left[\ln(2|M-1/4|+1/2)\right]\\
 & =\mathbb{E}\left[\ln\left(2\left|\frac{\epsilon\hat{X}}{\delta_{T}}-U-\frac{1}{4}\right|+\frac{1}{2}\right)\right]\\
 & \le\mathbb{E}\left[\ln\left(\frac{2\epsilon |\hat{X}|}{\delta_{T}}+\frac{2|U+1/4|+1/2}{\delta_{T}/\delta_{0}}\right)\right]\\
 & =\mathbb{E}\left[\ln\left(2\epsilon |\hat{X}|+2\delta_{0}|U+1/4|+\delta_{0}/2\right)\right]+\mathbb{E}[-\ln\delta_{T}]\\
 & \le\ln\left(2\epsilon\mathbb{E}[|\hat{X}|]+2\delta_{0}\mathbb{E}[|U+1/4|]+\delta_{0}/2\right)+\mathbb{E}[-\ln\delta_{T}]\\
 & \le\ln\left(2\epsilon|x|+2\epsilon\mathbb{E}[|\hat{X}-x|]+\left(\frac{5}{8}+\frac{1}{2}\right)\delta_{0}\right)+\mathbb{E}[-\ln\delta_{T}]\\
 & =\ln\left(2\epsilon|x|+\frac{9}{8}\delta_{0}+2\right)+\mathbb{E}[-\ln\delta_{T}]\\
 & \le\ln\left(2\epsilon|x|+\frac{9}{8}\ln(2\ell\ln\ell+1)+2\right)+\mathbb{E}[-\ln\delta_{T}].
\end{align*}
Since the length of the unsigned Elias delta coding of $k\ge1$ is
upper-bounded by $L(\ln k)$~\cite{elias1975universal}, the expected length of the
signed Elias delta coding of $M$ (which is the unsigned Elias delta
coding of $2|M-1/4|+1/2$) is upper-bounded by
\[
L\left(\ln\left(2\epsilon|x|+\frac{9}{8}\ln(2\ell\ln\ell+1)+2\right)+\mathbb{E}[-\ln\delta_{T}]\right).
\]
Note that $\mathbb{E}[-\ln\delta_{T}]=\mathbb{E}[T]\ln2-\ln\delta_{0}$
is a function of $\ell$. We can check numerically via a plot that
\begin{align*}
\mathbb{E}[-\ln\delta_{T}] & \le\ln\left(\frac{e}{\ell-1}+1\right)-\frac{1}{2}.
\end{align*}
Hence, the expected length is upper-bounded by
\[
L\bigg(\ln\Big(2\epsilon|x|+\frac{9}{8}\ln(2\ell\ln\ell+1)+2\Big)+\ln\Big(\frac{e}{\ell-1}+1\Big)-\frac{1}{2}\bigg).
\]
The vector case follows from applying this bound on each entry of $\mathbf{x}$, and the concavity of $L(z)$.
\end{proof}
\medskip

Hence, for large $\epsilon\Vert\mathbf{x}\Vert_1$ and a bounded $\ell$, the expected length is $n\log_2(\epsilon\Vert\mathbf{x}\Vert_1/n) + O(n\log\log(\epsilon\Vert\mathbf{x}\Vert_1/n) + n \log(1/(\ell - 1)))$.

\fi
\ifshortver
The encoding algorithm is given in Algorithm \ref{alg:enc}. The decoding is described in \eqref{eq:dyadic_dec}.
Proofs and details are moved to~\cite{shahmiri2023communication} to save space.
\fi

\iflongver
\medskip
\fi

\iflongver
\section{Algorithms}

The encoding and decoding algorithms are given in Algorithms \ref{alg:enc} and \ref{alg:dec}. We assume that the encoder has a pseudorandom number generator (PRNG) $\mathcal{P}$, and the decoder has a PRNG $\mathcal{P}'$, where the two PRNGs are initialized with the same seed (which was communicated ahead of time). The two PRNGs are kept synchronized, so we have $\mathcal{P}'=\mathcal{P}$, and we can invoke them to obtain shared randomness. These PRNGs can be reused for other tasks that require shared randomness. The encoder also has another PRNG initialized using a local random seed unknown to the decoder. 

The $g_\delta$ in \eqref{eq:g_delta} is a piecewise linear version of the Laplace distribution (see Figure~\ref{fig:lerp_1}, left figure). The vertices of $g_\delta$ can be broken into two sequences: those with nonnegative $x$-coordinates with exponentially decreasing $y$-coordinates, and those with negative $x$-coordinates with exponentially increasing $y$-coordinates. We can represent a random variable following $g_\delta$ as a sum of a $\mathrm{Unif}(-\delta/2,\delta/2)$ random variable and a mixture of two (appropriately scaled) geometric random variables, one representing the nonnegative $x$ vertices, and one representing the negative $x$ vertices.
Similarly, the vertices $f_t$ in \eqref{eq:f_T} can be divided into four sequences: nonnegative even $x$, nonnegative odd $x$, negative even $x$, and negative odd $x$ (see Figure~\ref{fig:lerp_1}, e.g., green region and lines). Therefore, the quantized piecewise linear mechanism can be carried out by sampling from a mixture of four geometric distributions, as given in Algorithm \ref{alg:enc}.
\fi

\begin{algorithm}[h]
\ifshortver
\textbf{$\;\;\;\;$Input:} $x\in\mathbb{R}$, $\epsilon > 0$, $\ell > 1$, PRNG $\mathcal{P}$. \textbf{Output:}  $M \in \mathbb{Z}$
\else
\textbf{$\;\;\;\;$Input:} data $x\in\mathbb{R}$, parameters $\epsilon > 0$, $\ell > 1$, PRNG $\mathcal{P}$

\textbf{$\;\;\;\;$Output:} description $M \in \mathbb{Z}$

\smallskip{}
\fi

\begin{algorithmic}[1]

\State{Let $\delta_{0}>0$ be the solution to $e^{\delta_{0}}=\delta_{0}\ell+1$}

\State{Use PRNG $\mathcal{P}$ to sample $T\in\mathbb{Z}_{\ge0}$ with cdf given by \eqref{eq:F_T}}

\State{Use PRNG $\mathcal{P}$ to sample $U\sim\mathrm{Unif}(-1/2,1/2)$}

\State{Switch to another PRNG initialized with local seed}

\ifshortver

\State{$\qs\leftarrow2^{-T}\qs_0$; $c_{0}\leftarrow\qs\frac{1+e^{-\qs}}{1-e^{-\qs}}$; $c_{1}\leftarrow2\qs\frac{1+e^{-2\qs}}{1-e^{-2\qs}}$}

\else

\State{$\qs\leftarrow2^{-T}\qs_0$}

\State{$c_{0}\leftarrow\qs\frac{1+e^{-\qs}}{1-e^{-\qs}}$}

\State{$c_{1}\leftarrow2\qs\frac{1+e^{-2\qs}}{1-e^{-2\qs}}$}

\fi

\State{$r\leftarrow F_{T}(T-1)/F_{T}(T)$ (where $F_{T}$ is given in \eqref{eq:F_T})}

\State{$(M_{0},Z)\leftarrow\begin{cases}
(0,2) & \text{w.p.}\propto\frac{1}{c_{0}}-\frac{r}{c_{1}}\\
(-2,-2) & \text{w.p.}\propto(\frac{1}{c_{0}}-\frac{r}{c_{1}})e^{-2\qs}\\
(1,2) & \text{w.p.}\propto\frac{e^{-\qs}}{c_{0}}-\frac{r(1+e^{-2\qs})}{2c_{1}}\\
(-1,-2) & \text{w.p.}\propto\frac{e^{-\qs}}{c_{0}}-\frac{r(1+e^{-2\qs})}{2c_{1}}
\end{cases}$

$\triangleright$\textit{``w.p.$\propto$'' means ``with probability proportional
to''}}

\State{$\tilde{M}\leftarrow M_{0}+Z\cdot\mathrm{Geom}_{0}(1-e^{-2\qs})$}
\Comment{\textit{$\mathrm{Geom}_{0}(\cdot)$
is Geom RV over $\{0,1,\ldots\}$, generated using local randomness}}

\State{\Return$\mathrm{round}\big(\frac{\epsilon x}{\qs}+\tilde{M}+\mathrm{Unif}(-\frac{1}{2},\frac{1}{2})-U\big)$}

\Comment{\textit{$\mathrm{Unif}(\cdots)$
is generated using local randomness}}

\end{algorithmic}

\caption{\label{alg:enc}$\textsc{Encode}(x,\epsilon,\ell,\mathcal{P})$}
\end{algorithm}

\iflongver
\begin{algorithm}
\textbf{$\;\;\;\;$Input:} description $M$, parameters $\epsilon > 0$, $\ell > 1$, PRNG $\mathcal{P}$

\textbf{$\;\;\;\;$Output:} Laplace-noise-perturbed data $\hat{X}\in\mathbb{R}$

\smallskip{}

\begin{algorithmic}[1]

\State{Let $\delta_{0}>0$ be the solution to $e^{\delta_{0}}=\delta_{0}\ell+1$}

\State{Use PRNG $\mathcal{P}$ to sample $T\in\mathbb{Z}_{\ge0}$ with cdf given by \eqref{eq:F_T}}

\State{Use PRNG $\mathcal{P}$ to sample $U\sim\mathrm{Unif}(-1/2,1/2)$}

\State{\Return$2^{-T}\qs_0(M+U)/\epsilon $}

\end{algorithmic}

\caption{\label{alg:dec}$\textsc{Decode}(M,\epsilon,\ell,\mathcal{P})$}
\end{algorithm}
\fi

\section{Evaluation}
\begin{figure}
\begin{centering}
\ifshortver
\includegraphics[scale=0.37]{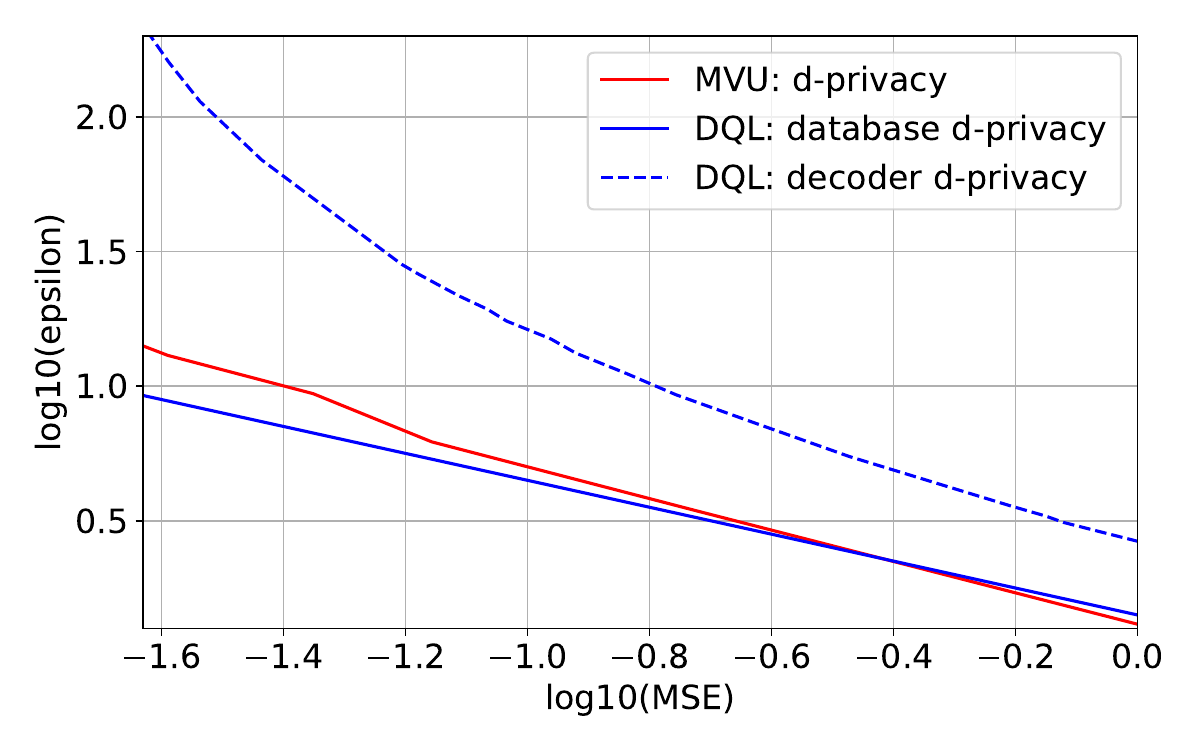}
\else
\includegraphics[scale=0.42]{diffquant_figcompare}
\fi
\ifshortver
\vspace{-18pt}
\fi
\par\end{centering}
\caption{\label{fig:experiment}The logarithm of the $\epsilon$ for the (database/decoder) $\epsilon \cdot d$-privacy of MVU, and the database and decoder $\epsilon \cdot d$-privacy of DQL, versus the logarithm of the mean squared error.}
\end{figure}

Since the proposed DQL mechanism has the same behavior as the Laplace mechanism (except with a smaller communication cost), regarding the effect of DQL on learning and estimation tasks, readers are referred to the vast literature on the Laplace mechanism. Here we present an experiment where the data $x \sim \mathrm{Unif}(-1,1)$, and we plot the $\epsilon$ for the decoder and database $\epsilon \cdot d$-privacy of DQL against the mean squared error $\mathbb{E}[|\hat{X}-x|^2]$ in Figure~\ref{fig:experiment} in log scale.
\footnote{Source code: https://github.com/shmoradiali/DyadicQuantizedLaplace} 
We compare to the minimum variance unbiased (MVU) mechanism \cite{chaudhuri2022privacy} (where the decoder and database privacy parameters are the same), which computes the optimal unbiased random quantizer with a deterministic finite set of $\hat{X}$ (and hence is not Laplace) via nonconvex optimization. Both mechanisms are fixed to use $5$ bits per sample on average.\footnote{Elias gamma code~\cite{elias1975universal} is used for DQL. The expected length is set to $5$.} The database privacy of DQL is better for small MSE, though the decoder privacy is not as good, making DQL suitable when database privacy is more important. Compared to MVU which involves nonconvex optimization with size exponential in the number of bits (computationally expensive for $>5$ bits per sample), the running time of DQL is only linear in the number of bits, making DQL a lightweight substitute for the Laplace mechanism. Also, unlike MVU, DQL can work on unbounded input $x \in \mathbb{R}$.







\iflongver
\section{Conclusion and Discussions}
In this paper, we presented the dyadic quantized Laplace (DQL) mechanism, which is the first mechanism realizing the Laplace mechanism precisely (i.e., the total noise added to the private data owned by the users follows the Laplace distribution), but with a finite amount of communication bits (instead of sending real numbers) and a provable differential privacy guarantee (in the sense of metric privacy \cite{Chatzikokolakis13Dprivacy}) against the decoder and database.
Furthermore, we did several experiments to compare our proposed mechanism against the MVU mechanism \cite{chaudhuri2022privacy}, and the experiment results show that DQL has better database privacy guarantee  for small mean squared error, even though the decoder privacy is not as good. Hence, when database privacy is vital in the system design, DQL is a better option compared to existing candidates.

A future research direction is to extend this construction to other noise distributions, e.g., the Gaussian mechanism \cite{Dwork14Book}, and the Laplace mechanism over $\mathbb{R}^n$ for the $d$-privacy with the $\ell_2$ distance instead of $\ell_1$ \cite{Chatzikokolakis13Dprivacy}.

\fi

\bibliographystyle{IEEEbib}
\bibliography{ref}

\begin{thebibliography}{10}

\bibitem{Dwork06DP}
Cynthia Dwork, Frank McSherry, Kobbi Nissim, and Adam Smith,
\newblock ``Calibrating noise to sensitivity in private data analysis,''
\newblock in {\em Theory of Cryptography}. 2006, pp. 265--284, Springer Berlin
  Heidelberg.

\bibitem{Dwork14Book}
Cynthia Dwork and Aaron Roth,
\newblock ``The algorithmic foundations of differential privacy,''
\newblock {\em Found. Trends Theor. Comput. Sci.}, vol. 9, no. 3-4, pp.
  211--407, 2014.

\bibitem{el2022differential}
Ahmed El~Ouadrhiri and Ahmed Abdelhadi,
\newblock ``Differential privacy for deep and federated learning: A survey,''
\newblock {\em IEEE access}, vol. 10, pp. 22359--22380, 2022.

\bibitem{li2020federated}
Tian Li, Anit~Kumar Sahu, Ameet Talwalkar, and Virginia Smith,
\newblock ``Federated learning: Challenges, methods, and future directions,''
\newblock {\em IEEE signal processing magazine}, vol. 37, no. 3, pp. 50--60,
  2020.

\bibitem{kessler2019sap}
Stephan Kessler, Jens Hoff, and Johann-Christoph Freytag,
\newblock ``{SAP} {HANA} goes private: from privacy research to privacy aware
  enterprise analytics,''
\newblock {\em Proceedings of the VLDB Endowment}, vol. 12, no. 12, pp.
  1998--2009, 2019.

\bibitem{Chatzikokolakis13Dprivacy}
Konstantinos Chatzikokolakis, Miguel~E. Andr{\'e}s, Nicol{\'a}s~Emilio
  Bordenabe, and Catuscia Palamidessi,
\newblock ``Broadening the scope of differential privacy using metrics,''
\newblock in {\em Lect. Notes Comput. Sc.}, Berlin, Heidelberg, 2013, pp.
  82--102, Springer Berlin Heidelberg.

\bibitem{andres2013geo}
Miguel~E Andr{\'e}s, Nicol{\'a}s~E Bordenabe, Konstantinos Chatzikokolakis, and
  Catuscia Palamidessi,
\newblock ``Geo-indistinguishability: Differential privacy for location-based
  systems,''
\newblock in {\em Proceedings of the 2013 ACM SIGSAC conference on Computer \&
  communications security}, 2013, pp. 901--914.

\bibitem{alvim2018local}
M{\'a}rio Alvim, Konstantinos Chatzikokolakis, Catuscia Palamidessi, and Anna
  Pazii,
\newblock ``Local differential privacy on metric spaces: optimizing the
  trade-off with utility,''
\newblock in {\em 2018 IEEE 31st Computer Security Foundations Symposium
  (CSF)}. IEEE, 2018, pp. 262--267.

\bibitem{dewri2012local}
Rinku Dewri,
\newblock ``Local differential perturbations: Location privacy under
  approximate knowledge attackers,''
\newblock {\em IEEE T. Mobile Comput.}, vol. 12, no. 12, pp. 2360--2372, 2012.

\bibitem{koufogiannis2015optimality}
Fragkiskos Koufogiannis, Shuo Han, and George~J Pappas,
\newblock ``Optimality of the {L}aplace mechanism in differential privacy,''
\newblock {\em arXiv preprint arXiv:1504.00065}, 2015.

\bibitem{wang2014entropy}
Yu~Wang, Zhenqi Huang, Sayan Mitra, and Geir~E Dullerud,
\newblock ``Entropy-minimizing mechanism for differential privacy of
  discrete-time linear feedback systems,''
\newblock in {\em 53rd IEEE Decis. Contr. P.}, 2014, pp. 2130--2135.

\bibitem{fernandes2021laplace}
Natasha Fernandes, Annabelle McIver, and Carroll Morgan,
\newblock ``The {L}aplace mechanism has optimal utility for differential
  privacy over continuous queries,''
\newblock in {\em 2021 36th Annual ACM/IEEE S. Log.}, 2021, pp. 1--12.

\bibitem{geng2014optimal}
Quan Geng and Pramod Viswanath,
\newblock ``The optimal mechanism in differential privacy,''
\newblock in {\em 2014 IEEE international symposium on information theory}.
  IEEE, 2014, pp. 2371--2375.

\bibitem{harsha2010communication}
P.~Harsha, R.~Jain, D.~McAllester, and J.~Radhakrishnan,
\newblock ``The communication complexity of correlation,''
\newblock {\em IEEE Trans. Inf. Theory}, vol. 56, no. 1, pp. 438--449, Jan
  2010.

\bibitem{sfrl_trans}
C.~T. Li and A.~{El Gamal},
\newblock ``Strong functional representation lemma and applications to coding
  theorems,''
\newblock {\em IEEE Trans. Inf. Theory}, vol. 64, no. 11, pp. 6967--6978, 2018.

\bibitem{hegazy2022randomized}
Mahmoud Hegazy and Cheuk~Ting Li,
\newblock ``Randomized quantization with exact error distribution,''
\newblock in {\em 2022 IEEE Information Theory Workshop}, 2022, pp. 350--355.

\bibitem{flamich2023greedy}
Gergely Flamich,
\newblock ``Greedy {P}oisson rejection sampling,''
\newblock {\em arXiv preprint arXiv:2305.15313}, 2023.

\bibitem{lang2023joint}
Natalie Lang, Elad Sofer, Tomer Shaked, and Nir Shlezinger,
\newblock ``Joint privacy enhancement and quantization in federated learning,''
\newblock {\em IEEE T. Signal Proces.}, vol. 71, pp. 295--310, 2023.

\bibitem{duchi2013local}
John~C Duchi, Michael~I Jordan, and Martin~J Wainwright,
\newblock ``Local privacy and statistical minimax rates,''
\newblock in {\em 2013 IEEE 54th Ann. IEEE Symp. Found.} IEEE, 2013, pp.
  429--438.

\bibitem{xiong2016randomized}
Sijie Xiong, Anand~D Sarwate, and Narayan~B Mandayam,
\newblock ``Randomized requantization with local differential privacy,''
\newblock in {\em 2016 IEEE T. Acoust. Speech (ICASSP)}. IEEE, 2016, pp.
  2189--2193.

\bibitem{chaudhuri2022privacy}
Kamalika Chaudhuri, Chuan Guo, and Mike Rabbat,
\newblock ``Privacy-aware compression for federated data analysis,''
\newblock in {\em Uncertainty in Artificial Intelligence}. PMLR, 2022, pp.
  296--306.

\bibitem{youn2023randomized}
Yeojoon Youn, Zihao Hu, Juba Ziani, and Jacob Abernethy,
\newblock ``Randomized quantization is all you need for differential privacy in
  federated learning,''
\newblock {\em arXiv preprint arXiv:2306.11913}, 2023.

\bibitem{gray1998quantization}
Robert~M. Gray and David~L. Neuhoff,
\newblock ``Quantization,''
\newblock {\em IEEE Trans. Inf. Theory}, vol. 44, no. 6, pp. 2325--2383, 1998.

\bibitem{gray1993dithered}
Robert~M Gray and Thomas~G Stockham,
\newblock ``Dithered quantizers,''
\newblock {\em IEEE Trans. Inf. Theory}, vol. 39, no. 3, pp. 805--812, 1993.

\bibitem{bennett2002entanglement}
Charles~H Bennett, Peter~W Shor, John Smolin, and Ashish~V Thapliyal,
\newblock ``Entanglement-assisted capacity of a quantum channel and the reverse
  {S}hannon theorem,''
\newblock {\em IEEE Trans. Inf. Theory}, vol. 48, no. 10, pp. 2637--2655, 2002.

\bibitem{li2018universal}
C.~T. {Li} and A.~{El Gamal},
\newblock ``A universal coding scheme for remote generation of continuous
  random variables,''
\newblock {\em IEEE Trans. Inf. Theory}, vol. 64, no. 4, pp. 2583--2592, April
  2018.

\bibitem{havasi2019minimal}
Marton Havasi, Robert Peharz, and Jos{\'e}~Miguel Hern{\'a}ndez-Lobato,
\newblock ``Minimal random code learning: Getting bits back from compressed
  model parameters,''
\newblock in {\em 7th International Conference on Learning Representations,
  ICLR 2019}, 2019.

\bibitem{shah2022optimal}
Abhin Shah, Wei-Ning Chen, Johannes Balle, Peter Kairouz, and Lucas Theis,
\newblock ``Optimal compression of locally differentially private mechanisms,''
\newblock in {\em International Conference on Artificial Intelligence and
  Statistics}. PMLR, 2022, pp. 7680--7723.

\bibitem{Kasiviswanathan08LDP}
Shiva~Prasad Kasiviswanathan, Homin~K. Lee, Kobbi Nissim, Sofya Raskhodnikova,
  and Adam Smith,
\newblock ``What can we learn privately?,''
\newblock in {\em 2008 49th Annual IEEE Symposium on Foundations of Computer
  Science}, 2008, pp. 531--540.

\bibitem{kirac1996results}
Ahmet Kirac and PP~Vaidyanathan,
\newblock ``Results on lattice vector quantization with dithering,''
\newblock {\em IEEE Transactions On Circuits and Systems II: Analog and Digital
  Signal Processing}, vol. 43, no. 12, pp. 811--826, 1996.

\bibitem{zamir2014}
Ram Zamir, Bobak Nazer, Yuval Kochman, and Ilai Bistritz,
\newblock {\em Lattice Coding for Signals and Networks: A Structured Coding
  Approach to Quantization, Modulation and Multiuser Information Theory},
\newblock Cambridge University Press, 2014.

\bibitem{shlezinger2020uveqfed}
Nir Shlezinger, Mingzhe Chen, Yonina~C Eldar, H~Vincent Poor, and Shuguang Cui,
\newblock ``{UVeQFed}: Universal vector quantization for federated learning,''
\newblock {\em IEEE Transactions on Signal Processing}, vol. 69, pp. 500--514,
  2020.

\bibitem{elias1975universal}
Peter Elias,
\newblock ``Universal codeword sets and representations of the integers,''
\newblock {\em IEEE Trans. Inf. Theory}, vol. 21, no. 2, pp. 194--203, 1975.

\end{thebibliography}

\ifshortver
\else


\section{Appendix: Proof of Theorem~\ref{THM:PRIVACY}}\label{subsec:privacy_pf}

Assume $b=1$ without loss of generality. We have $1\cdot d$-privacy
against the database since the noise distribution is $\mathrm{Laplace}(0,1)$.
We now prove the bound for decoder privacy. Fix any $t\in\mathbb{Z}_{\ge0}$.
Let $\delta=\delta_{t}=2^{-t}\delta_{0}$. The function $F_{T}(t)g_{\delta}(x)$
is piecewise linear with vertices in the form
\[
\left(k\delta,\;\frac{F_{T}(t)}{c_{\delta}}e^{-|k\delta|}\right)
\]
for $k\in\mathbb{Z}$. The function $F_{T}(t-1)g_{2\delta}(x)$ is
piecewise linear with vertices in the form
\[
\left(k\delta,\;\frac{F_{T}(t-1)}{c_{2\delta}}e^{-|k\delta|}\right)
\]
for even $k$. Hence the function $\tilde{f}_{t}(x)=F_{T}(t)g_{\delta}(x)-F_{T}(t-1)g_{2\delta}(x)$
(which $f_{t}(x)$ in \eqref{eq:f_T} is proportional to) is piecewise linear
with vertices in the form
\[
\left(k\delta,\;\frac{F_{T}(t)}{c_{\delta}}e^{-|k\delta|}-\frac{F_{T}(t-1)}{c_{2\delta}}e^{-|k\delta|}\right)
\]
for even $k$, and
\[
\left(k\delta,\;\frac{F_{T}(t)}{c_{\delta}}e^{-|k\delta|}-\frac{F_{T}(t-1)}{c_{2\delta}}\cdot\frac{e^{-|(k-1)\delta|}+e^{-|(k+1)\delta|}}{2}\right)
\]
for odd $k$. Let
\begin{align*}
r & =F_{T}(t-1)/F_{T}(t)\\
 & =\frac{4-4(\delta\ell+1)e^{-\delta}}{(1+e^{-\delta})^{2}\big(2(1+e^{-2\delta})^{-1}-\delta\ell-1\big)}.
\end{align*}
By the arguments in Lemma \ref{lem:piecewise_lip}, over the interval $x\in[0,\delta]$, $\delta\log f_{t}(x)$
has a Lipschitz constant
\begin{align*}
 & \frac{\frac{F_{T}(t)}{c_{\delta}}-\frac{F_{T}(t-1)}{c_{2\delta}}}{\frac{F_{T}(t)}{c_{\delta}}e^{-\delta}-\frac{F_{T}(t-1)}{c_{2\delta}}\cdot\frac{1+e^{-2\delta}}{2}}-1\\
 & =\frac{\frac{1}{c_{\delta}}-\frac{r}{c_{2\delta}}}{\frac{1}{c_{\delta}}e^{-\delta}-\frac{r}{c_{2\delta}}\frac{1+e^{-2\delta}}{2}}-1.
\end{align*}
Over the interval $x\in[\delta,2\delta]$, $\delta\log f_{t}(x)$
has a Lipschitz constant being the maximum of
\begin{align*}
 & \frac{\frac{F_{T}(t)}{c_{\delta}}e^{-2\delta}-\frac{F_{T}(t-1)}{c_{2\delta}}e^{-2\delta}}{\frac{F_{T}(t)}{c_{\delta}}e^{-\delta}-\frac{F_{T}(t-1)}{c_{2\delta}}\cdot\frac{1+e^{-2\delta}}{2}}-1\\
 & \le\frac{\frac{1}{c_{\delta}}-\frac{r}{c_{2\delta}}}{\frac{1}{c_{\delta}}e^{-\delta}-\frac{r}{c_{2\delta}}\frac{1+e^{-2\delta}}{2}}-1,
\end{align*}
and
\begin{align*}
 & \frac{\frac{F_{T}(t)}{c_{\delta}}e^{-\delta}-\frac{F_{T}(t-1)}{c_{2\delta}}\cdot\frac{1+e^{-2\delta}}{2}}{\frac{F_{T}(t)}{c_{\delta}}e^{-2\delta}-\frac{F_{T}(t-1)}{c_{2\delta}}e^{-2\delta}}-1\\
 & =\frac{\frac{1}{c_{\delta}}-\frac{r}{c_{2\delta}}\cdot\frac{e^{\delta}+e^{-\delta}}{2}}{\frac{1}{c_{\delta}}e^{-\delta}-\frac{r}{c_{2\delta}}e^{-\delta}}-1\\
 & \le\frac{\frac{1}{c_{\delta}}-\frac{r}{c_{2\delta}}}{\frac{1}{c_{\delta}}e^{-\delta}-\frac{r}{c_{2\delta}}\frac{1+e^{-2\delta}}{2}}-1.
\end{align*}
The other intervals are similar. Hence, over $x\in\mathbb{R}$, $\delta\log f_{t}(x)$
has a Lipschitz constant
\begin{align*}
 & \frac{\frac{1}{c_{\delta}}-\frac{r}{c_{2\delta}}}{\frac{1}{c_{\delta}}e^{-\delta}-\frac{r}{c_{2\delta}}\frac{1+e^{-2\delta}}{2}}-1\\
 & =\frac{\frac{1}{\delta}\frac{1-e^{-\delta}}{1+e^{-\delta}}-\frac{r}{2\delta}\frac{1-e^{-2\delta}}{1+e^{-2\delta}}}{\frac{1}{\delta}\frac{1-e^{-\delta}}{1+e^{-\delta}}e^{-\delta}-\frac{r}{2\delta}\frac{1-e^{-2\delta}}{1+e^{-2\delta}}\frac{1+e^{-2\delta}}{2}}-1\\
 & =\frac{\frac{1-e^{-\delta}}{1+e^{-\delta}}-\frac{r}{2}\frac{(1+e^{-\delta})(1-e^{-\delta})}{1+e^{-2\delta}}}{\frac{1-e^{-\delta}}{1+e^{-\delta}}e^{-\delta}-\frac{r}{4}(1+e^{-\delta})(1-e^{-\delta})}-1\\
 & =\frac{1-\frac{r}{2}\frac{(1+e^{-\delta})^{2}}{1+e^{-2\delta}}}{e^{-\delta}-\frac{r}{4}(1+e^{-\delta})^{2}}-1\\
 & =\frac{1-\frac{2-2(\delta\ell+1)e^{-\delta}}{2-(\delta\ell+1)(1+e^{-2\delta})}}{e^{-\delta}-\frac{1-(\delta\ell+1)e^{-\delta}}{2(1+e^{-2\delta})^{-1}-\delta\ell-1}}-1\\
 & =\frac{\frac{(\delta\ell+1)(2e^{-\delta}-1-e^{-2\delta})}{2-(\delta\ell+1)(1+e^{-2\delta})}}{\frac{2e^{-\delta}(1+e^{-2\delta})^{-1}-1}{2(1+e^{-2\delta})^{-1}-\delta\ell-1}}-1\\
 & =\frac{(\delta\ell+1)(2e^{-\delta}-1-e^{-2\delta})}{2e^{-\delta}-(1+e^{-2\delta})}-1\\
 & =\delta\ell.
\end{align*}
Therefore $\log f_{t}(x)$ has a Lipschitz constant $\ell$. In the
dyadic quantized Laplace mechanism, we choose $T$ at random, and
run the quantized piecewise linear mechanism on $f_{T}(x)$. Since
each of the quantized piecewise linear mechanism is $\ell\cdot d$-private,
the overall mechanism is $\ell\cdot d$-private as well.

\fi

\end{document}